\newcommand\bcmdtab{\noindent\bgroup\tabcolsep=0pt%
  \begin{tabular}{@{}p{10pc}@{}p{20pc}@{}}}
\newcommand\ecmdtab{\end{tabular}\egroup}
\def\naf{\ensuremath{\raise.17ex\hbox{\ensuremath{\scriptstyle\mathtt{\sim}}}}\xspace}
\DeclareMathOperator*{\argmin}{arg\,min}
\newtheorem{example}{Example}[section]
\newcommand\pfun{\mathrel{\ooalign{\hfil$\mapstochar\mkern5mu$\hfil\cr$\to$\cr}}}
\pgfplotsset{
    filter discard warning=false 
    , legend cell align=left
    , minor grid style={loosely dotted, lightgray}
    , major grid style={loosely dashed, lightgray}
}
  \title[Shared aggregate sets in answer set programming]
        {Shared aggregate sets in answer set programming}
  \author[M. Alviano, C. Dodaro and M. Maratea]
         {
             MARIO ALVIANO\\
             DEMACS, University of Calabria, Italy\\
             \email{alviano@mat.unical.it}
             \and
             CARMINE DODARO, MARCO MARATEA\\
             DIBRIS, University of Genova, Italy\\
             \email{dodaro@dibris.unige.it, marco@dibris.unige.it}
        }
\begin{document}
\label{firstpage}

\maketitle

\begin{abstract}
Aggregates are among the most frequently used linguistic extensions of answer set programming.
The result of an aggregation may introduce new constants during the instantiation of the input program, a feature known as value invention.
When the aggregation involves literals whose truth value is undefined at instantiation time, modern grounders introduce several instances of the aggregate, one for each possible interpretation of the undefined literals.
This paper introduces new data structures and techniques to handle such cases, and more in general aggregations on the same aggregate set identified in the ground program in input.
The proposed solution reduces the memory footprint of the solver without sacrificing efficiency.
On the contrary, the performance of the solver may improve thanks to the addition of some simple entailed clauses which are not easily discovered otherwise, and since redundant computation is avoided during propagation.
Empirical evidence of the potential impact of the proposed solution is given.
(Under consideration for acceptance in TPLP).
\end{abstract}

\begin{keywords}
answer set programming;
aggregations in logic programming;
efficient computation.
\end{keywords}

\section{Introduction}

Answer set programming (ASP) extends logic programming with linguistic constructs conceived to ease the representation of common knowledge \cite{DBLP:conf/iclp/GelfondL88,DBLP:journals/cacm/BrewkaET11,DBLP:journals/aim/JanhunenN16,DBLP:journals/aim/Lifschitz16}, and with constructs specifically designed for industrial applications \cite{DBLP:conf/aaai/GebserKROSW13,DBLP:journals/tplp/DodaroGLMRS16}.
Aggregates are among the most frequently used linguistic extensions of ASP \cite{DBLP:journals/ai/SimonsNS02,DBLP:journals/ai/LiuPST10,DBLP:conf/aaaiss/BartholomewLM11,DBLP:journals/tocl/Ferraris11,DBLP:journals/ai/FaberPL11,DBLP:journals/jair/AlvianoCFLP11,DBLP:journals/ia/Alviano11,DBLP:conf/aaai/AlvianoFS16}, providing a natural representation to reason on properties involving sets of literals.
Simple forms of aggregations combine common aggregation functions, aggregate sets, and comparison operators;
common aggregation functions include \texttt{\#sum}, \texttt{\#count} (special case of \texttt{\#sum} with uniform weights), \texttt{\#min} and \texttt{\#max} (replaced by normal rules by modern grounders).
However, an aggregation may also result into \emph{value invention}, as new constants are possibly introduced for assignments over aggregate sets involving literals whose truth value is undefined at instantiation time.
In such cases, several instances of the aggregation are produced, one for each possible interpretation of the undefined literals in the aggregate set.
Actually, such instances share the same ground aggregate set, and only the compared terms differ.

An example of the scenario described above is given by the following program (in ASP-Core-2 syntax; see \citeNP{aspcore2}):
\begin{align*}
    \begin{array}{ll}
        \tt \{p(2);\ p(5)\}. \qquad &
        \tt q(S)\ :\!-\ \ S = \#sum\{X : p(X)\}.
    \end{array}
\end{align*}
whose instantiation contains
\mbox{$\tt q(\mathit{s})\ :\!-\ \ \mathit{s} = \#sum\{2 : p(2);\ 5 : p(5)\}$}
for $s \in \{0, 2, 5, 7\}$.
Therefore, the ground program contains 4 aggregates sharing the same aggregate set.

Grounders have to apply some additional transformations, as the input language of modern solvers requires a normal form that essentially restricts rules with aggregates to the form
\mbox{$\tt \mathit{a}\ :\!-\ \ \#sum\{\mathit{w_1 : \ell_1; \cdots; w_n : \ell_n}\} \geq \mathit{b}$},
where $a$ is an atom not occurring in any other rule head, $n \geq 1$, $\ell_1,\ldots,\ell_n$ are literals, and $w_1,\ldots,w_n,b$ are positive integers.
After that, solvers such as \textsc{clasp} \cite{DBLP:journals/ai/GebserKS12,DBLP:conf/lpnmr/GebserKS09,DBLP:conf/lpnmr/GebserKK0S15,DBLP:conf/ijcai/GebserKS13} and \textsc{wasp} \cite{DBLP:conf/cilc/DodaroAFLRS11,DBLP:conf/lpnmr/AlvianoDFLR13,DBLP:conf/lpnmr/AlvianoDLR15} associate every normalized aggregate with a \emph{propagator}, that is, a specific data structure receiving notifications for truth assignments to literals $a,\ell_1,\ldots,\ell_n$, and possibly deriving new assignments for these literals.

Back to the above example, the program processed by the solver would be the following:
\begin{align*}
    \begin{array}{lllll}
        \tt \{p(2);\ p(5)\}. &
        \multicolumn{3}{l}{a_x\mathtt{\ :\!-\ \ \#sum\{2 : p(2);\ 5 : p(5)\}} {}\geq x.\quad (\forall x \in \{1,2,3,5,6,7\})} \\
        \tt q(0)\ :\!-\ \ not\ \mathit{a}_1. &
        \tt q(2)\ :\!-\ \ \mathit{a}_2,\ not\ \mathit{a}_3. &
        \tt q(5)\ :\!-\ \ \mathit{a}_5,\ not\ \mathit{a}_6. &
        \tt q(7)\ :\!-\ \ \mathit{a}_7. \\
    \end{array}
\end{align*}
where intuitively each aggregate $s = \tt \#sum\{2 : p(2);\ 5 : p(5)\}$ was replaced by the conjunction $\tt \#sum\{2 : p(2);\ 5 : p(5)\} \geq \mathit{s},\ not\ \#sum\{2 : p(2);\ 5 : p(5)\} \geq \mathit{s}+1$.
Actually, some of the aggregates in the above program differ only at a syntactic level, while they are indistinguishable at a semantic level, as they encode the same boolean function.
For example, $\tt \#sum\{2 : p(2);\ 5 : p(5)\} \geq 1$ and $\tt \#sum\{2 : p(2);\ 5 : p(5)\} \geq 2$ are both true whenever at least one literal in their aggregate set is true.
Moreover, there are some simple entailed formulas which are not easily derivable by clause learning;
specifically, $a_x \rightarrow a_y$ for any $x < y$.
Finally, notifications are always triggered to all propagators, whose computation is therefore redundant.

Aggregates with the same aggregate set are not necessarily the outcome of the instantiation of assignments, and may result by the instantiation of different rules.
Actually, solvers cannot speculate on the origin of normalized aggregates in the instantiated program, and therefore any optimization technique must rely only on the information encoded in the aggregates themselves.
The definition of such techniques is the focus of this paper.

First of all, the bound of any aggregate is replaced by the smallest possible sum being greater or equal than the bound itself, so that rule \mbox{$a_1\tt\ :\!-\ \ \#sum\{2 : p(2);\ 5 : p(5)\} \geq 1$} becomes rule \mbox{$a_1\tt\ :\!-\ \ \#sum\{2 : p(2);\ 5 : p(5)\} \geq 2$}.
After that, duplicated aggregates are removed, and aggregates with the same aggregate sets are identified, so that they can be associated with a new propagator, referred to as \emph{shared aggregate set propagator} (Section~\ref{sec:multiaggregates}).
The new propagator reduces the memory footprint of the solver, preserves the behavior of the propagators currently used by \textsc{clasp} and \textsc{wasp}, and its compact representation eases the introduction of rules encoding the entailed formulas described above.

The proposed techniques are implemented in \textsc{wasp} (Section~\ref{sec:implementation}), and tested empirically on a synthetic domain obtained by simplifying an encoding for a real world application in medical informatics (the authors were asked to analyze the inefficiency of ASP solvers on this encoding, but constrained to not disclose any sensitive information; see Section~\ref{sec:componentassignment}).
The potential impact of the new techniques is clearly evident on the synthetic domain, where the performance of \textsc{wasp} goes over the state-of-the-art solver \textsc{clasp} (Section~\ref{sec:experiment}).
An additional benchmark is obtained from instances of the ASP Competitions \cite{DBLP:conf/lpnmr/AlvianoCCDDIKKOPPRRSSSWX13,DBLP:journals/ai/CalimeriGMR16,DBLP:journals/jair/GebserMR17}, with the aim to verify the absence of overhead when shared aggregate set propagators are applied to encodings that are already handled efficiently by \textsc{clasp} and \textsc{wasp}.
Within this respect, instances with assignments over count aggregates are considered.


\section{Preliminaries}
\label{sec:prel}

This section introduces minimal background knowledge required to present the results of this paper.
Specifically, syntax and semantics of ASP programs are given, where the syntax is properly simplified to ease the presentation.
After that, the stable model search procedure implemented by modern ASP solvers is sketched, focusing on the notion of propagator.

\subsection{Syntax and semantics}\label{sec:syntax}

Let $\mathcal{A}$ be a set of \emph{atoms}.
An \emph{atomic formula} is either an atom, or the connective $\bot$.
A \emph{literal} is an atomic formula possibly preceded by the \emph{default negation} symbol \naf.
For a literal $\ell$, let $\overline{\ell}$ denote the \emph{complement} of $\ell$, that is, $\overline{p} = \naf p$ and $\overline{\naf p} = p$ for all $p \in \mathcal{A} \cup \{\bot\}$;
for a set $L$ of literals, let $\overline{L}$ be $\{\overline{\ell} \mid \ell \in L\}$.
Let $\top$ be a compact representation of $\naf\bot$.

A \emph{rule} is of one of the following forms:
\begin{align}
    \label{eq:rule}
    p_1 \vee \cdots \vee p_m \leftarrow{} & \ell_1, \ldots, \ell_n \\
    \label{eq:aggr}
    p \leftarrow{} & \textsc{sum}\{w_0 : \ell_0, \ldots, w_n : \ell_n\} \geq b
\end{align}
where $m \geq 1$, $n \geq 0$, $p_1,\ldots,p_m$ are atomic formulas, $p$ is an atom, $\ell_0,\ldots,\ell_n$ are distinct literals, and $b,w_0,\ldots,w_n$ are positive integers.
For a rule $r$ of the form (\ref{eq:rule}), let $H(r)$ denote the set $\{p_1,\ldots,p_m\} \cap \mathcal{A}$ of \emph{head atoms}, and $B(r)$ denote the set $\{\ell_1,\ldots,\ell_n\}$ of \emph{body literals} (Note that $\bot$ may occur in $B(r)$, but it is excluded from $H(r)$).
For a rule $r$ of the form (\ref{eq:aggr}), define $\mathit{id}(r) := p$, $\mathit{sum}(r) := \textsc{sum}\{w_0 : \ell_0, \ldots, w_n : \ell_n\} \geq b$, $\mathit{elem}(r) := \{(w_i,\ell_i) | i \in [0..n]\}$, and $\mathit{bound}(r) := b$.

A \emph{program} $\Pi$ is a finite set of rules.
Let $\mathit{atoms}(\Pi)$, $\mathit{rules}^\vee(\Pi)$, and $\mathit{rules}^{\sum}(\Pi)$ denote respectively the set of atoms occurring in $\Pi$, the set of rules of the form (\ref{eq:rule}) in $\Pi$, and the set of rules of the form (\ref{eq:aggr}) in $\Pi$.
In the following, every program $\Pi$ is assumed to satisfy the following property:
for each $r \in \mathit{rules}^{\sum}(\Pi)$, there is no $r' \in \mathit{rules}^\vee(\Pi)$ with $\mathit{id}(r) \in H(r')$, and there is no $r' \in \mathit{rules}^{\sum}(\Pi) \setminus \{r\}$ with $\mathit{id}(r) = \mathit{id}(r')$;
stated differently, $\mathit{id}(r)$ is an identifier for the aggregation $\mathit{sum}(r)$.

\begin{example}[Running example] \label{ex:aggregates}
    Let $\Pi_\mathit{run}$ be the following program (similar to the example in the introduction):
    \begin{align*}
    \begin{array}{lll}
    g_2:\ p_2 \vee n_2 \leftarrow{} &
    g_5:\ p_5 \vee n_5 \leftarrow{} &
    r_x:\ a_x \leftarrow \textsc{sum}\{2 : p_2, 5 : p_5\} \geq x \quad \forall x \in \{1,2,3,5,6,7\}.
    \end{array}
    \end{align*}
    Note that, for all $x \in \{1,2,3,5,6,7\}$, $\mathit{id}(r_x) = a_x$ does not occur in any other rule head, and therefore it is an identifier for $\mathit{sum}(r_x) = \textsc{sum}\{2 : p_2, 5 : p_5\} \geq x$, whose elements are $\mathit{elem}(r_x) = \{(2,p_2),$ $(5,p_5)\}$, and whose bound is $\mathit{bound}(r_x) = x$.
    \hfill $\lhd$
\end{example}

The \emph{dependency graph} $G_\Pi$ of $\Pi$ has nodes $\mathit{atoms}(\Pi)$, and an arc $xy$ (where $x$ and $y$ are atoms) for each rule $r \in \mathit{rules}^\vee(\Pi)$ such that $x \in H(r)$ and $y \in B(r)$, and for each rule $r \in \mathit{rules}^{\sum}(\Pi)$ such that $x = \mathit{id}(r)$ and $(w,y) \in \mathit{elem}(r)$ for some integer $w$.
An atom is \emph{recursive} in $\Pi$ if it is involved in a cycle of $G_\Pi$.
In the following every program $\Pi$ is assumed to satisfy the following property:
if $r \in \Pi$ is of the form (\ref{eq:aggr}), $\mathit{id}(r)$ is not recursive in $\Pi$.
Note that $\Pi_\mathit{run}$ has such a property.

An \emph{assignment} $I$ is a set of literals (different from $\bot$ and $\top$) such that $I \cap \overline{I} = \emptyset$;
literals in $I$ are true, literals in $\overline{I}$ are false, and all other literals are undefined.
An \emph{interpretation} $I$ for a program $\Pi$ is an assignment such that $I \cup \overline{I} = \mathit{atoms}(\Pi) \cup \overline{\mathit{atoms}(\Pi)}$.
Relation $\models$ is inductively defined as follows:
$I \not\models \bot$ (hence, $I \models \top$);
for $p \in \mathcal{A}$, $I \models p$ if $p \in I$;
$I \models \naf p$ if $I \not\models p$;
for a rule $r$ of the form (\ref{eq:rule}), $I \models B(r)$ if $I \models \ell$ for all $\ell \in B(r)$, $I \models H(r)$ if $I \models p$ for some $p \in H(r)$, and $I \models r$ if $I \models H(r)$ whenever $I \models B(r)$;
for a rule $r$ of the form (\ref{eq:aggr}), $I \models \mathit{sum}(r)$ if $\sum_{(w,\ell) \in \mathit{elem}(r),\ I \models \ell}{w} \geq \mathit{bound}(r)$, and $I \models r$ if $I \models \mathit{id}(r)$ whenever $I \models \mathit{sum}(r)$;
for a program $\Pi$, $I \models \Pi$ if $I \models r$ for all $r \in \Pi$.
For any expression $\pi$, if $I \models \pi$, we say that $I$ is a \emph{model} of $\pi$. 

The \emph{reduct} $\Pi^I$ of a program $\Pi$ with respect to an interpretation $I$ comprises the following rules:
for each rule $r \in \mathit{rules}^\vee(\Pi)$ such that $I \models B(r)$, $\Pi^I$ contains a rule $r^I$ of the form (\ref{eq:rule}) with $H(r^I) = H(r)$, and $B(r^I) = B(r) \cap \mathcal{A}$;
for each rule $r \in \mathit{rules}^{\sum}(\Pi)$ such that $I \models \mathit{sum}(r)$, $\Pi^I$ contains a rule $r^I$ of the form (\ref{eq:aggr}) with $\mathit{id}(r^I) = \mathit{id}(r)$, $\mathit{bound}(r^I) = \mathit{bound}(r)$, and $\mathit{elem}(r) = \{(w,\ell) \in \mathit{elem}(r) \mid I \models \ell,\ \ell \in \mathcal{A}\} \cup \{(\sum_{(w,\ell) \in \mathit{elem}(r),\ I \models \ell,\ \ell \notin \mathcal{A}}{w},\top)\}$.
An interpretation $I$ is a \emph{stable model} of a program $\Pi$ if $I \models \Pi$ and there is no $J \subset I$ such that $J \models \Pi^I$.
Let $\mathit{SM}(\Pi)$ denote the set of stable models of $\Pi$.
For instance, for $\Pi_\mathit{run}$ from Example~\ref{ex:aggregates}, $\mathit{SM}(\Pi_\mathit{run})$ comprises the following models:
$\{n_2, n_5\}$,
$\{n_2, p_5, a_1, a_2, a_3, a_5\}$,
$\{p_2, n_5, a_1, a_2\}$, and
$\{p_2, p_5, a_1, a_2, a_3, a_5, a_6, a_7\}$.

\subsection{Stable model search and propagators}\label{sec:propagators}

Stable model search is implemented in modern ASP solvers as a \emph{conflict-driven clause learning} (CDCL) algorithm \cite{DBLP:journals/ai/GebserKS12}, which is based on the pattern \textit{choose}-\textit{propagate}-\textit{learn}.
Intuitively, the idea is to build a stable model step-by-step starting from an empty assignment $I$.
At each step of computation, a branching literal is heuristically \textit{chosen} to be added in $I$, and \emph{propagated} so to add deterministic consequences to $I$, if possible.
Otherwise, if the complement of a deterministic consequence already belongs to $I$, a \emph{conflict} is identified.
Each deterministic consequence $\ell$ added to $I$ is also associated with a set of \emph{reasons}, essentially the true literals causing the addition of $\ell$ to $I$.
Conflicts are analyzed to \emph{learn} new clauses by applying backward resolution on the reasons of the conflictual literals;
while performing backward resolution, literals previously added to $I$ are removed, until the learned clause causes the addition of a new deterministic consequence that drives the search into a different branch.
This process is repeated until either $I$ is a stable model, or the empty clause is learned, meaning that the input program has no stable models.

A \emph{propagator} is a module for computing deterministic consequences of an assignment.
The simplest of such modules is \emph{unit propagation}:
unit propagation adds a literal $\ell$ to the current assignment $I$ if the input program contains a rule $r$ of the form (\ref{eq:rule}) such that $r$ can be satisfied only by $I \cup \{\ell\}$.
More formally, let $C(r) := \{p_1, \ldots, p_m, \overline{\ell_1}, \ldots, \overline{\ell_n}\}$ be the clause representation of $r$.
A literal $\ell \in C(r) \setminus I$ is unit propagated with respect to $I$ and $r$ if $\overline{C(r) \setminus \{\ell\}} \subseteq I$;
let $\mathit{reasons}(\ell)$ be $\overline{C(r) \setminus \{\ell\}}$ in this case.
For instance, atom $p_2$ is unit propagated with respect to the assignment $\{a_1, a_2, \naf n_2\}$ and rule $g_2$ from Example~\ref{ex:aggregates}, and $\mathit{reasons}(p_2)$ is $\{\naf n_2\}$.

Concerning rules of the form (\ref{eq:aggr}), specific propagators have been proposed in the literature \cite{DBLP:conf/iclp/GebserKKS09,DBLP:journals/fuin/FaberLMR11}, here referred to as \textit{aggregate propagators}.
The idea is that four types of inferences are associated with rules of the form (\ref{eq:aggr}).
In particular, given a rule $r$ of the form (\ref{eq:aggr}), and an assignment $I$, the following literals are inferred (if not already in $I$):
\begin{enumerate}
\item[(A1)]
$\mathit{id}(r)$, whenever $\sum_{(w,\ell) \in \mathit{elem}(r),\ \ell \in I}{w} \geq \mathit{bound}(r)$, with $\mathit{reasons}(\mathit{id}(r))$ comprising true literals in $\mathit{elem}(r)$, that is, $\{\ell \mid (w,\ell) \in \mathit{elem}(r),\ \ell \in I\}$;

\item[(A2)]
$\overline{\mathit{id}(r)}$, whenever $\sum_{(w,\ell) \in \mathit{elem}(r),\ \overline{\ell} \notin I}{w} < \mathit{bound}(r)$, with $\mathit{reasons}(\overline{\mathit{id}(r)})$ comprising the complements of false literals in $\mathit{elem}(r)$, that is, $\{\overline{\ell} \mid (w,\ell) \in \mathit{elem}(r),\ \overline{\ell} \in I\}$;

\item[(A3)]
any $\ell$ such that $(w,\ell) \in \mathit{elem}(r)$, whenever both $\mathit{id}(r) \in I$ and $\sum_{(w',\ell') \in \mathit{elem}(r) \setminus \{(w,\ell)\},\ \overline{\ell'} \notin I}{w'}$ ${}< \mathit{bound}(r)$, with $\mathit{reasons}(\ell)$ comprising $\mathit{id}(r)$ and the complements of false literals in $\mathit{elem}(r)$, that is, $\{\mathit{id}(r)\} \cup \{\overline{\ell'} \mid (w',\ell') \in \mathit{elem}(r),\ \overline{\ell'} \in I\}$;

\item[(A4)]
any $\ell$ such that $(w,\overline{\ell}) \in \mathit{elem}(r)$, whenever both $\overline{\mathit{id}(r)} \in I$ and $\sum_{(w',\ell') \in \mathit{elem}(r) \setminus \{(w,\overline{\ell})\},\ \ell' \in I}{w'}$ ${}\geq \mathit{bound}(r) - w$, with $\mathit{reasons}(\ell)$ comprising $\overline{\mathit{id}(r)}$ and true literals in $\mathit{elem}(r)$, that is, $\{\overline{\mathit{id}(r)}\} \cup \{\ell' \mid (w',\ell') \in \mathit{elem}(r),\ \ell' \in I\}$.
\end{enumerate}
For instance, for the rules from Example~\ref{ex:aggregates}, if $I$ is $\{a_1, a_2, \naf p_2, p_5\}$, then $a_3, a_5$ are inferred by (A1), and $\naf a_6, \naf a_7$ are inferred by (A2);
in this case, $\mathit{reasons}(a_3) = \mathit{reasons}(a_5) = \{p_5\}$, and $\mathit{reasons}(\naf a_6)$ ${}= \mathit{reasons}(\naf a_7) = \{\naf p_2\}$.
Similarly, if $I$ is $\{a_5, \naf a_6\}$, then $p_5$ is inferred by (A3), and $\naf p_2$ is inferred by (A4);
in this case, $\mathit{reasons}(p_5) = \{a_5\}$, and $\mathit{reasons}(\naf p_2) = \{\naf a_6\}$.

\section{Handling ASP programs with shared aggregate sets}\label{sec:multiaggregates}
\label{sec:aggr}

As discussed in the introduction, ASP programs may contain several rules of the form (\ref{eq:aggr}) with the same elements but different bounds.
In such cases, aggregate propagators introduced in the previous section are subject to some intrinsic inefficiency.
First of all, very similar data structures have to be stored.
Second, and more important, in order to check the applicability of the four inference rules described in the previous section, redundant computation is performed because the same sum is computed several times.
The objective of this section is to introduce a new propagator for circumventing such inefficiencies.
The new propagator actually comes with a few simplifications that are applied to the input program in order to remove trivially unsatisfiable sums, to normalize bounds, to identify some equivalent sums, and to add a few entailed integrity constraints that are not easy to discover by CDCL.

Additional notation is introduced here to simplify the presentation of the simplifications.
For a program $\Pi$, and a rule $r \in \mathit{rules}^{\sum}(\Pi)$, define $\mathit{sums}(r) := \{\sum_{(\ell,w) \in S}{w} \mid \emptyset \subset S \subseteq \mathit{elem}(r)\}$, the set of possible nonzero sums for the elements of $r$, and define $\mathit{next}(r, \Pi)$ to be 
\begin{align*}
    \argmin_{\mathit{id}(r') : r' \in \mathit{rules}^{\sum}(\Pi),\ \mathit{elem}(r') = \mathit{elem}(r),\ \mathit{bound}(r') > \mathit{bound}(r)}{\mathit{bound}(r')}
\end{align*}
if it exists, and $\bot$ otherwise;
that is, $\mathit{next}(r,\Pi)$ is the id of the rule with the next possible sum (if it exists; if there are several, one is selected according to some order, as for example lexicographical order).
The following rewritings, from the first to the last, are applied to every rule $r \in \mathit{rules}^{\sum}(\Pi)$:
\begin{itemize}
	\item[(i)] if $\mathit{bound}(r) > \sum_{(w,\ell) \in \mathit{elem}(r)} w$ then $\Pi := (\Pi \setminus \{r\}) \cup \{\bot \leftarrow \mathit{id}(r)\}$;
	
	\item[(ii)] $\mathit{bound}(r) := \min(\{b \in \mathit{sums}(r) \mid b \geq \mathit{bound}(r)\})$;
	
	\item[(iii)] if there is $r' \in \mathit{rules}^{\sum}(\Pi)$ such that $\mathit{sum}(r') = \mathit{sum}(r)$, $\Pi := (\Pi \setminus \{r\}) \cup \{\mathit{id}(r) \leftarrow \mathit{id}(r')\}$;	
	
	\item[(iv)] $\Pi := \Pi \cup \{\bot \leftarrow \naf \mathit{id}(r), next(r, \Pi)\}$.
\end{itemize}
Intuitively, rewriting (i) eliminates from the input program all trivially unsatisfiable sums, that is, those which are unsatisfied even when all their elements are true.
Rewriting (ii) normalizes bounds of sums, so that the bound of each rule can be obtained by summing some weights occurring in the rule.
After that, rewriting (iii) identifies rules with the same sum, and leave a single copy of the aggregation, while identifiers of removed sums are forced to preserve their semantics by a new, simple rule.
Finally, rewriting (iv) introduces entailed integrity constraints that are difficult to identify otherwise.
Note that rewritings (i), (ii) and (iii) can be applied, in this order, one rule at time, while rewriting (iv) has to be delayed until the other rewritings have been applied to all rules of the program.

\begin{restatable}{theorem}{ThmRew}\label{thm:rew}
Let $\Pi$ be a program, and $\Pi'$ be obtained by applying (i)--(iv) to $\Pi$.
$\mathit{SM}(\Pi) = \mathit{SM}(\Pi')$ holds.
\end{restatable}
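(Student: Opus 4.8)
The plan is to prove $\mathit{SM}(\Pi)=\mathit{SM}(\Pi')$ by showing that each of the four rewritings (i)--(iv), when applied to a single rule, preserves the set of stable models; the full claim then follows by induction on the (finite) sequence of rewriting steps. So it suffices to fix a rule $r \in \mathit{rules}^{\sum}(\Pi)$ and a step, call the result $\Pi_1$, and argue $\mathit{SM}(\Pi)=\mathit{SM}(\Pi_1)$. Throughout I would use the key structural hypothesis that $\mathit{id}(r)$ is not recursive and does not occur in any other head, so the truth value of $\mathit{id}(r)$ in any stable model is completely determined by $\mathit{sum}(r)$ and the support condition; in particular, in any stable model $I$ of $\Pi$ we have $\mathit{id}(r)\in I$ iff $I\models\mathit{sum}(r)$.

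For step (i): assume $\mathit{bound}(r) > \sum_{(w,\ell)\in\mathit{elem}(r)}w$. Then no interpretation satisfies $\mathit{sum}(r)$, so in the reduct the aggregate rule never contributes (it is dropped whenever $I\not\models\mathit{sum}(r)$), and in $\Pi$ itself the rule $r$ imposes no constraint on $\mathit{id}(r)$ other than that, by non-recursiveness and absence of other heads, $\mathit{id}(r)$ cannot be in any stable model (it would be unsupported). The replacement constraint $\bot\leftarrow\mathit{id}(r)$ enforces exactly $\mathit{id}(r)\notin I$, and being a constraint it affects neither supportedness of other atoms nor the reduct. Hence the stable models coincide. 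For step (ii): since $\mathit{bound}(r)$ is replaced by $\min\{b\in\mathit{sums}(r)\mid b\ge\mathit{bound}(r)\}$ (which exists after (i)), I would observe that for every interpretation $I$, $\sum_{(w,\ell)\in\mathit{elem}(r),\,I\models\ell}w$ is an element of $\mathit{sums}(r)\cup\{0\}$, and it is $\ge\mathit{bound}(r)$ iff it is $\ge$ the new bound; thus $I\models\mathit{sum}(r)$ is unchanged, and the reduct rule $r^I$ changes only its (inert) $\mathit{bound}(r^I)$ component, which the definition of stable model never inspects beyond the $\ge$ test — so models and reducts agree, giving equal stable model sets. For step (iii): if $\mathit{sum}(r')=\mathit{sum}(r)$ for some other rule $r'$, then for every interpretation $I$, $I\models\mathit{sum}(r)\Leftrightarrow I\models\mathit{sum}(r')$; replacing $r$ by $\mathit{id}(r)\leftarrow\mathit{id}(r')$ I would show preserves both $I\models\Pi$ (using $I\models r'$ and the equivalence to transfer the truth of $\mathit{id}(r')$, hence of $\mathit{id}(r)$, exactly when needed) and the reduct/minimality check, because the new support path for $\mathit{id}(r)$ goes through $\mathit{id}(r')$ which is itself supported precisely when $\mathit{sum}(r')$ holds.

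For step (iv): here the added constraint $\bot\leftarrow\naf\,\mathit{id}(r),\mathit{next}(r,\Pi)$ must be shown to be \emph{entailed}, i.e.\ satisfied by every stable model $I$ of the current program, so that adding it as a constraint removes no stable models (and, being a constraint, adds none and leaves reducts untouched). The content is: if $\mathit{next}(r,\Pi)=\mathit{id}(r'')$ with $\mathit{bound}(r'')$ the next larger attainable sum, then $\mathit{id}(r'')\in I$ forces $I\models\mathit{sum}(r'')$, hence $\sum_{I\models\ell}w\ge\mathit{bound}(r'')>\mathit{bound}(r)$, hence $I\models\mathit{sum}(r)$, hence (non-recursiveness) $\mathit{id}(r)\in I$, contradicting $\naf\,\mathit{id}(r)$; and if $\mathit{next}(r,\Pi)=\bot$ the constraint is vacuous. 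I expect step (iv) to be the main obstacle, not because the implication chain is hard but because one must be careful that it uses \emph{stability} of $I$ (to get $\mathit{id}(r'')\in I\Rightarrow I\models\mathit{sum}(r'')$, the converse of the support direction actually comes from $I\models r''$, so that part is fine; the delicate direction is $I\models\mathit{sum}(r)\Rightarrow\mathit{id}(r)\in I$, which is exactly where non-recursiveness of $\mathit{id}(r)$ and its unique occurrence in heads is invoked to rule out an unsupported-but-true situation), and that rewriting (iv) is applied only after (i)--(iii) have been applied everywhere, so that "$\mathit{next}$" is computed relative to the already-normalized program; I would make explicit that $\Pi'$ and the program-before-(iv) have the same stable models so the entailment statement transfers. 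A minor technical point to handle cleanly is the order-dependence in the $\argmin$/lexicographic choice inside $\mathit{next}$: any admissible choice yields an entailed constraint, so the argument is uniform in that choice.
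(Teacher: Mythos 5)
Your proposal is correct and follows essentially the same route as the paper: a step-by-step induction over single applications of (i)--(iv), using the fact that $\mathit{id}(r)$ occurs in no other head and is non-recursive to conclude $\mathit{id}(r)\in I$ iff $I\models\mathit{sum}(r)$ in any stable model, the boolean-function equivalence for (ii), and entailment of the added constraint for (iv). The only place where the paper invests noticeably more effort than your sketch is case (iii), where it explicitly runs the minimality check in both directions (in particular arguing that if $J\not\models\mathit{sum}(r)$ then $J\setminus\{\mathit{id}(r')\}$ would still model the reduct, contradicting stability); your remark that the new support path for $\mathit{id}(r)$ goes through $\mathit{id}(r')$, which is supported exactly when $\mathit{sum}(r')$ holds, correctly anticipates that argument.
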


\begin{proof}
	Let $\Pi_0,\ldots,\Pi_n$ ($n \geq 0$) be the sequence of programs obtained by applying (i)--(iv), that is, $\Pi_0 = \Pi$, $\Pi_n = \Pi'$, and for all $j \geq 0$ $\Pi_{j+1}$ is obtained from $\Pi_j$ by a single application of a rewriting among (i)--(iv).
	We shall show that $\mathit{SM}(\Pi_{j}) = \mathit{SM}(\Pi_{j+1})$, for any $j \geq 0$.
	
	\begin{enumerate}[leftmargin=*,labelsep=2pt,label=(\Alph{enumi})]
		\item[(i)]
		Let $\Pi_{j+1}$ be obtained by applying (i) to $\Pi_j$.
		\begin{enumerate}[leftmargin=*,labelsep=2pt,label=(\Alph{enumi})]
			\item[($\subseteq$)]
			Let $I \in \mathit{SM}(\Pi_j)$.
			Since $\mathit{bound}(r) > \sum_{(w,\ell) \in \mathit{elem}(r)} w$, $I \not\models \mathit{sum}(r)$ holds, and since $\mathit{id}(r)$ does not occur in any other rule head, $I \not\models \mathit{id}(r)$ holds as well (otherwise, $I$ could not be a stable model).
			Therefore, $I \models \bot \leftarrow \mathit{id}(r)$, and $I \models \Pi_{j+1}$.
			Moreover, $\Pi_j^I = \Pi_{j+1}^I$, and therefore $I \in \mathit{SM}(\Pi_{j+1})$.
			
			\item[($\supseteq$)]
			Let $I \in \mathit{SM}(\Pi_{j+1})$.
			Since $\bot \leftarrow \mathit{id}(r)$ belongs to $\Pi_{j+1}$, $I \not\models \mathit{id}(r)$, and therefore $I \models \Pi_j$.
			Since $\mathit{bound}(r) > \sum_{(w,\ell) \in \mathit{elem}(r)} w$, $I \not\models \mathit{sum}(r)$ holds, and therefore $\Pi_j^I = \Pi_{j+1}^I$.
			Hence, $I \in \mathit{SM}(\Pi_j)$.
		\end{enumerate}
		
		\item[(ii)]
		Let $\Pi_{j+1}$ be obtained by applying (ii) to $\Pi_j$.
		For any interpretation $I$, $I \models \mathit{sum}(r)$ if and only if $I \models \textsc{sum}(\mathit{elem}(r)) \geq \min(\{b \in \mathit{sums}(r) \mid b \geq \mathit{bound}(r)\}$, that is, the two aggregates define the same boolean function.
		
		\item[(iii)]
		Let $\Pi_{j+1}$ be obtained by applying (iii) to $\Pi_j$.
		\begin{enumerate}[leftmargin=*,labelsep=2pt,label=(\Alph{enumi})]
			\item[($\subseteq$)]
			Let $I \in \mathit{SM}(\Pi_j)$.
			If $I \not\models \mathit{sum}(r)$, then $I \not\models \mathit{sum}(r')$, $I \not\models \mathit{id}(r)$, and $I \not\models \mathit{id}(r')$;
			hence, $I \models \Pi_{j+1}$, and $\Pi_j^I = \Pi_{j+1}^I$, which imply $I \in \mathit{SM}(\Pi_{j+1})$.
			
			Otherwise, if $I \models \mathit{sum}(r)$, then $I \models \mathit{sum}(r')$, $I \models \mathit{id}(r)$, and $I \models \mathit{id}(r')$.
			Hence, $I \models \Pi_{j+1}$ holds.
			It remains to show that $J \subset I$ implies $J \not\models \Pi_{j+1}^I$.
			Since $I \in \mathit{SM}(\Pi_j)$ by assumption, $J \not\models \Pi_j^I$ holds.
			Let $J$ be such that $J \models \Pi_j^I \cap \Pi_{j+1}^I$ (otherwise it is trivial);
			hence, there is $r'' \in \Pi_j^I \setminus \Pi_{j+1}^I$ such that $J \not\models r''$, and such a rule has been obtained from $r$.
			Thus, $J \models \mathit{sum}(r)$, and $J \not\models \mathit{id}(r)$.
			In this case, $J \not\models \mathit{id}(r) \leftarrow \mathit{id}(r')$ because $I \models \mathit{id}(r')$ follows from $J \models \Pi_j^I \cap \Pi_{j+1}^I$ and $J \models \mathit{sum}(r)$.
			It turns out that $I \in \mathit{SM}(\Pi_{j+1})$ holds.
			
			\item[($\supseteq$)]
			Let $I \in \mathit{SM}(\Pi_{j+1})$.
			If $I \not\models \mathit{sum}(r)$, then $I \not\models \mathit{sum}(r')$, $I \not\models \mathit{id}(r)$, and $I \not\models \mathit{id}(r')$;
			hence, $I \models \Pi_{j}$, and $\Pi_j^I = \Pi_{j+1}^I$, which imply $I \in \mathit{SM}(\Pi_{j+1})$.
			
			Otherwise, if $I \models \mathit{sum}(r)$, then $I \models \mathit{sum}(r')$, $I \models \mathit{id}(r)$, and $I \models \mathit{id}(r')$.
			Hence, $I \models \Pi_{j}$ holds.
			It remains to show that $J \subset I$ implies $J \not\models \Pi_{j}^I$.
			Since $I \in \mathit{SM}(\Pi_{j+1})$ by assumption, $J \not\models \Pi_{j+1}^I$ holds.
			Let $J$ be such that $J \models \Pi_j^I \cap \Pi_{j+1}^I$ (otherwise it is trivial);
			hence, there is $r'' \in \Pi_{j+1}^I \setminus \Pi_{j}^I$ such that $J \not\models r''$, and such a rule is $\mathit{id}(r) \leftarrow \mathit{id}(r')$, that is, $J \not\models \mathit{id}(r)$ and $J \models \mathit{id}(r')$.
			If $J \models \mathit{sum}(r)$, then $J \not\models r$, and therefore $J \not\models \Pi_j^I$.
			Otherwise, if $J \not\models \mathit{sum}(r)$, then $J \setminus \{\mathit{id}(r')\} \models \Pi_{j+1}^I$, which would contradict the assumption that $I \in \mathit{SM}(\Pi_{j+1})$.
			It turns out that $I \in \mathit{SM}(\Pi_j)$ holds.
		\end{enumerate}
		
		\item[(iv)]
		Let $\Pi_{j+1}$ be obtained by applying (iv) to $\Pi_j$.
		\begin{enumerate}[leftmargin=*,labelsep=2pt,label=(\Alph{enumi})]
			\item[($\subseteq$)]
			Let $I \in \mathit{SM}(\Pi_j)$.
			If $I \models \mathit{id}(r)$, then $I \models \Pi_{j+1}$ and $\Pi_j^I = \Pi_{j+1}^I$, so $I \in \mathit{SM}(\Pi_{j+1})$.
			Otherwise, let $I \not\models \mathit{id}(r)$.
			Therefore, $I \not\models \textsc{sum}(\mathit{elem}(r)) \geq b$ for all $b \geq \mathit{bound}(r)$.
			Hence, $I \not\models \mathit{next}(r,\Pi)$, and therefore $I \models \bot \leftarrow \naf \mathit{id}(r), \mathit{next}(r,\Pi)$.
			It turns out that $I \models \Pi_{j+1}$ and $\Pi_j^I = \Pi_{j+1}^I$, so $I \in \mathit{SM}(\Pi_{j+1})$.
			
			\item[($\supseteq$)]
			Let $I \in \mathit{SM}(\Pi_{j+1})$.
			Since $\Pi_j \subset \Pi_{j+1}$, $I \models \Pi_j$.
			Moreover, $\Pi_j^I = \Pi_{j+1}^I$, and therefore $I \in \mathit{SM}(\Pi_j)$.
		\end{enumerate}
	\end{enumerate}
	The proof is complete.
	\hfill
\end{proof}

\begin{example}[Continuing Example~\ref{ex:aggregates}]\label{ex:simplifications}
Note that, for all $x \in \{1,2,3,5,6,7\}$, $\mathit{sums}(r_x) = \{2, 5, 7\}$.
The application of (i)--(iv) produces the following program $\Pi_\mathit{run}'$:
\begin{align*}
    \begin{array}{lll}
    g_2: \quad p_2 \vee n_2 \leftarrow{} &
    g_5: \quad p_5 \vee n_5 \leftarrow{} \\
    r_1': \quad a_1 \leftarrow a_2 &
    r_2: \quad a_2 \leftarrow \textsc{sum}\{2 : p_2, 5 : p_5\} \geq 2 &
    s_2: \quad \bot \leftarrow \naf a_2, a_5 \\
    r_3': \quad a_3 \leftarrow a_5 & 
    r_5: \quad a_5 \leftarrow \textsc{sum}\{2 : p_2, 5 : p_5\} \geq 5 &
    s_5: \quad \bot \leftarrow \naf a_5, a_7 \\
    r_6': \quad a_6 \leftarrow a_7 &
    r_7: \quad a_7 \leftarrow \textsc{sum}\{2 : p_2, 5 : p_5\} \geq 7 &
    s_7: \quad \bot \leftarrow \naf a_7, \bot
    \end{array}
\end{align*}
For instance, the application of (ii) modifies the bound of $r_1$ from 1 to 2, so that it now occurs in $\mathit{sums}(r_1)$;
the resulting rule is then processed by (iii), and replaced by $r_1'$.
Rules $s_2$, $s_5$, and $s_7$ are added by the application of (iv).
\hfill $\lhd$
\end{example}

Shared aggregate sets are identified during the simplification process described above.
For each of these shared aggregate sets, a specific propagator is instantiated.
The new propagator, referred to as \emph{shared aggregate set propagator}, has data structures specifically conceived to implement the inference rules of several aggregate propagators avoiding redundant computation.
More formally, for a program $\Pi$, let $X$ be a maximal subset of $\mathit{rules}^{\sum}(\Pi)$ such that all $r,r' \in X$ satisfy $\mathit{elem}(r) = \mathit{elem}(r')$.
A shared aggregate set propagator for $X$ is associated with the following sets:
$\mathit{elem}(X)$, storing the shared set of elements, that is, $\mathit{elem}(r)$, for any $r \in X$; and
$\mathit{bounds}(X)$, storing identifiers and bounds, that is, $\{(\mathit{bound}(r),\mathit{id}(r)) \mid r \in X\}$.
Moreover, the following functions are used by the propagator when applied to an assignment $I$:
\begin{itemize}
\item
$\mathrm{min\_sum}(X,I) := \sum_{(w,\ell) \in \mathit{elem}(X),\ \ell \in I} w$, the smallest value of the sum among those assigned by interpretations extending the current assignment;

\item
$\mathrm{max\_sum}(X,I) := \sum_{(w,\ell) \in \mathit{elem}(X),\ \overline{\ell} \notin I} w$, the greatest value of the sum among those assigned by interpretations extending of the current assignment;

\item
$\mathrm{lower\_bound}(X,I) := \max(\{b \mid (b,\ell) \in \mathit{bounds}(X),\ \ell \in I\} \cup \{0\})$, a value that must be reached by the sum in any extension of the current assignment;

\item
$\mathrm{upper\_bound}(X,I) := \min(\{b \mid (b,\ell) \in \mathit{bounds}(X),\ \overline{\ell} \in I\} \cup \{+\infty\})$, a value that must not be reached by the sum in any extension of the current assignment.
\end{itemize}

\begin{example}[Continuing Example~\ref{ex:simplifications}]\label{ex:propagatorsets}
Let $X$ be $\mathit{rules}^{\sum}(\Pi_\mathit{run}'))$.
The shared aggregate set propagator associated with $X$ has $\mathit{elem}(X) = \{(2,p_2), (5,p_5)\}$, and $\mathit{bounds}(X) = \{(2,a_2), (5,a_5), (7,a_7)\}$.
For the empty assignment, \linebreak
$\mathrm{min\_sum}(X,\emptyset) = 0$,
$\mathrm{max\_sum}(X,\emptyset) = 7$,
$\mathrm{lower\_bound}(X,\emptyset) = 0$, and
$\mathrm{upper\_bound}(X,\emptyset) = +\infty$;
for $I = \{p_2,\naf p_5,a_5,$ $\naf a_7\}$,
$\mathrm{min\_sum}(X,I) = 2$,
$\mathrm{max\_sum}(X,I) = 2$,
$\mathrm{lower\_bound}(X,I) = 5$, and
$\mathrm{upper\_bound}(X,I) = 7$.
$\hfill \lhd$
\end{example}

\begin{function}[t]
    \caption{OnLiteralTrue($X$, $I$, $\ell$)}\label{fn:onliteraltrue}
    \If{$(w,\ell) \in \mathit{elem}(X)$}{\label{alg:ln:litrue}
        $R := \{\ell' \mid (w',\ell') \in \mathit{elem}(X), \ \ell' \in I\}$\;
        $R' :=  R \cup \{ \overline{\ell'} \mid (\mathrm{upper\_bound}(X,I), \ell') \in \mathit{bounds}(X)\}$\;
        \Return $\{(\ell', R) \mid (b,\ell') \in \mathit{bounds}(X), \ \ell' \notin I, \ b \leq \mathrm{min\_sum}(X,I)\} \cup {}$ \phantom{xxxx} $\{(\overline{\ell'}, R') \mid (w',\ell') \in \mathit{elem}(X), \ \overline{\ell'} \notin I, \ w' \geq \mathrm{upper\_bound}(X,I) - \mathrm{min\_sum}(X,I)\}$\;
    }
    \If{$(w,\overline{\ell}) \in \mathit{elem}(X)$}{\label{alg:ln:litfalse}
        $R := \{\overline{\ell'} \mid (w',\ell') \in \mathit{elem}(X), \ \overline{\ell'} \in I\}$\;
        $R' := R \cup \{ \ell' \mid (\mathrm{lower\_bound}(X,I),\ell') \in \mathit{bounds}(X)\}$\;
        \Return $\{(\overline{\ell'},R) \mid (b,\ell') \in \mathit{bounds}(X), \ \overline{\ell'} \notin I, \ b > \mathrm{max\_sum}(X,I)\} \cup {}$ \phantom{xxxx} $\{(\ell', R') \mid (w',\ell') \in \mathit{elem}(X), \ \ell' \not \in I, \ w' > \mathrm{max\_sum}(X,I) - \mathrm{lower\_bound}(X,I)\}$\;
    }
    \If{$(b,\ell) \in \mathit{bounds}(X)$ $\mathbf{and}$}{\label{alg:ln:idtrue}  
            $R := \{\overline{\ell'} \mid (w',\ell') \in \mathit{elem}(X), \ \overline{\ell'} \in I\} \cup \{\ell\}$\;
            \Return $\{(\ell', R) \mid (w',\ell') \in \mathit{elem}(X), \ \ell' \notin I, \ w' > \mathrm{max\_sum}(X,I) - \mathrm{lower\_bound}(X,I)\}$\;
    }
    \If{$(b,\overline{\ell}) \in \mathit{bounds}(X)$ $\mathbf{and}$}{\label{alg:ln:idfalse} 
            $R := \{\ell' \mid (w',\ell') \in \mathit{elem}(X), \ \ell' \in I\} \cup \{\ell\}$\;
            \Return $\{(\overline{\ell'}, R) \mid (w',\ell') \in \mathit{elem}(X), \ \overline{\ell'} \notin I, \ w' \geq \mathrm{upper\_bound}(X,I) - \mathrm{min\_sum}(X,I)\}$\;
    }
    \Return $\emptyset$\;
\end{function}

All inferences of shared aggregate set propagators are shown as function \ref{fn:onliteraltrue}, whose input is the set $X$ of rules handled by the propagator, an assignment $I$, and a literal $\ell$, where $\ell$ is the last literal added to $I$.
The output of \ref{fn:onliteraltrue}($X$,$I$,$\ell$) is a set of pairs of the form $(\ell', R)$, where $\ell'$ is an inferred literal, and $R$ a set of literals;
the solver has to add $\ell'$ to $I$, and to assign $R$ to $\mathit{reasons}(\ell')$.
Specifically, \ref{fn:onliteraltrue}($X$,$I$,$\ell$) may infer something in the following cases:

\smallskip
\noindent
\emph{Case 1.}
If $\ell$, the last literal added to $I$, occurs in an element $(w,\ell)$ of the aggregate set (line \ref{alg:ln:litrue}), then the minimum value that can be assigned to the sum is augmented by $w$, and therefore identifiers associated with bounds being less or equal than such a value can be inferred;
the reasons of the inferred literals are the true literals occurring in the aggregate set (line~2);
such an inference is analogous to (A1) for aggregate propagators.
Moreover, every literal in the aggregate set whose truth would lead to a conflict are inferred false, where the conflict would arise if the addition of the weight of the inferred literal exceeds the currently known upper bound;
the reasons of the inferred literals are the true literals in the aggregate set and the complement of the literal associated to the currently known upper bound (line~3);
such an inference is analogous to (A4) for aggregate propagators.

\smallskip
\noindent
\emph{Case 2.}
If the complement of $\ell$ occurs in an element $(w,\overline{\ell})$ of the aggregate set (line \ref{alg:ln:litfalse}), then the maximum value that can be assigned to the sum is decremented by $w$, and therefore identifiers associated with bounds being greater than such a value can be inferred false;
the reasons of the inferred literals are the complements of the false literals in the aggregate set (line~6);
such an inference is analogous to (A2) for aggregate propagators.
Moreover, in this case every literal in the aggregate set whose falsity would lead to a conflict are inferred true, where the conflict would arise if the subtraction of the weight of the inferred literal is under the currently known lower bound;
the reasons of the inferred literals are the complements of the false literals in the aggregate set and the literal associated to the currently known upper bound (line~7);
such an inference is analogous to (A3) for aggregate propagators.

\smallskip
\noindent
\emph{Case 3.}
If $\ell$ occurs in an element $(b,\ell)$ of $\mathit{bounds}(X)$ (line \ref{alg:ln:idtrue}), then the lower bound is possibly increased, and therefore every literal in the aggregate set whose falsity would lead to a conflict are inferred true, where the conflict would arise if the subtraction of the weight of the inferred literal does not reach the currently known lower bound;
the reasons of the inferred literals are the complements of the false literals in the aggregate set and the literal $\ell$ (line~10);
such an inference is analogous to (A3) for aggregate propagators.

\smallskip
\noindent
\emph{Case 4.}
If the complement of $\ell$ occurs in an element $(b,\overline{\ell})$ of $\mathit{bounds}(X)$ (line \ref{alg:ln:idfalse}), then the upper bound is possibly decreased, and therefore every literal in the aggregate set whose truth would lead to a conflict are inferred false, where the conflict would arise if the addition of the weight of the inferred literal exceeds the currently known upper bound;
the reasons of the inferred literals are the true literals in the aggregate set and the literal $\ell$ (line~13);
such an inference is analogous to (A4) for aggregate propagators.

\begin{restatable}{theorem}{ThmInferences}\label{thm:inferences}
Let $\Pi$ be a program obtained by applying rewritings (i)--(iv), $X$ be a maximal subset of $\mathit{rules}^{\sum}(\Pi)$ such that $r,r' \in X$ satisfy $\mathit{elem}(r) = \mathit{elem}(r')$, $I$ be an assignment, and $\ell$ be the last literal added to $I$.
A literal $\ell'$ is inferred by $\mathrm{OnLiteralTrue}(X,I,\ell)$ with reasons $R$ if and only if $\ell'$ is inferred with reasons $R$ by applying (A1)--(A4) to some $r \in X$ and assignment $I$.
\end{restatable}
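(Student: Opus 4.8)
The plan is to prove both directions by a case analysis on the rôle $\ell$ plays in $X$, putting each of the four conditional branches of $\mathrm{OnLiteralTrue}$ in correspondence with one or two of the schemata (A1)--(A4). The glue throughout is that all rules of $X$ share one aggregate set, so that for every $r\in X$ we have $\sum_{(w,\ell')\in\mathit{elem}(r),\,\ell'\in I}w=\mathrm{min\_sum}(X,I)$ and $\sum_{(w,\ell')\in\mathit{elem}(r),\,\overline{\ell'}\notin I}w=\mathrm{max\_sum}(X,I)$, while $\mathit{bounds}(X)$ records exactly the pairs $(\mathit{bound}(r),\mathit{id}(r))$ with $r\in X$, these bounds being pairwise distinct after rewriting~(iii). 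I would also exploit two structural facts already available in the excerpt: first, $\mathit{id}(r)$ is non-recursive and heads only $r$, hence in any stable model $\mathit{id}(r)\in I$ forces $\mathit{sum}(r)$ to hold (this is precisely what validates (A3)/(A4)), and moreover no identifier of a rule of $X$ is one of its own aggregate atoms, so that the branches testing $\mathit{elem}(X)$ and those testing $\mathit{bounds}(X)$ are about genuinely different forms of $\ell$; second, the integrity constraints added by rewriting~(iv) make the set $\{\mathit{bound}(r)\mid r\in X,\ \overline{\mathit{id}(r)}\in I\}$ upward closed and $\{\mathit{bound}(r)\mid r\in X,\ \mathit{id}(r)\in I\}$ downward closed in a state reachable during search, so that $\mathrm{lower\_bound}(X,I)$ and $\mathrm{upper\_bound}(X,I)$ are attained by the unique ``frontier'' rules of $X$.

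The core observation is that, among the rules of $X$, the (A1)/(A3) premise is weakest for the rule attaining $\mathrm{lower\_bound}(X,I)$ and the (A2)/(A4) premise is weakest for the rule attaining $\mathrm{upper\_bound}(X,I)$. Concretely I would check the following. (1) If $\ell$ occurs positively in the aggregate set, $\mathrm{min\_sum}$ has just increased; the first returned set is $\{(\mathit{id}(r),R)\mid r\in X,\ \mathit{id}(r)\notin I,\ \mathit{bound}(r)\le\mathrm{min\_sum}(X,I)\}$, which is exactly the family of (A1)-inferences for rules of $X$, with the common reason $R$, the set of true aggregate-set literals, which does not depend on $r$; the second returned set collects the aggregate-set literals forced false because adding their weight would reach $\mathrm{upper\_bound}(X,I)$, i.e.\ exactly the (A4)-inferences, the recorded reason being that of the $\mathrm{upper\_bound}$-rule. (2) The mirror statement for $\ell$ occurring negatively in the aggregate set, matching the first returned set with (A2) and the second with (A3). (3) If $\ell=\mathit{id}(r)$, the guard of line~\ref{alg:ln:idtrue} fires only when $\ell$ raises the known lower bound, i.e.\ when $\mathit{bound}(r)=\mathrm{lower\_bound}(X,I)$, and then the returned set is exactly the (A3)-inferences for $r$. (4) The mirror statement for $\overline\ell=\mathit{id}(r)$, matching (A4) for the $\mathrm{upper\_bound}$-rule. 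In each case the forward inclusion (every returned pair $(\ell',R)$ is produced by the matching schema for the witness rule) follows by unfolding the definitions; the backward inclusion follows from the fact that the state before $\ell$ was already closed under (A1)--(A4) and under the constraints of~(iv), so the only schema-inferences not already present are those enabled by $\ell$, and these are precisely the returned ones.

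The step I expect to be the real obstacle is the accounting of reasons for (A3) and (A4): a single aggregate-set literal $\ell'$ can be forced by (A3) (resp.\ (A4)) for several rules of $X$ at once, each with a different identifier in its reason, whereas $\mathrm{OnLiteralTrue}$ records just one reason. The resolution is the monotonicity noted above: the extremal rule --- the one whose identifier $\mathrm{OnLiteralTrue}$ inserts into the reason, attaining $\mathrm{lower\_bound}(X,I)$ or $\mathrm{upper\_bound}(X,I)$ --- has the weakest premise, so the literals it forces already include every literal forced by any other rule of $X$, and after~(iv) it is exactly the frontier rule; hence its reason is a legitimate reason for each forced literal and is the one returned, and the returned literals coincide with those reachable by (A1)--(A4) over $X$. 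I would isolate this monotonicity as a one-line preliminary observation, after which the four cases become mechanical. A minor point worth stating explicitly is that the branches of $\mathrm{OnLiteralTrue}$ need not be mutually exclusive when the aggregate set contains a literal together with its complement; since each branch is verified independently against its own schema(ta), this causes no difficulty.
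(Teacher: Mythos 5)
Your proposal is correct and follows essentially the same route as the paper's proof: both directions are handled by a four-way case split matching each guarded branch of $\mathrm{OnLiteralTrue}$ to the schemata (A1)--(A4), with the witness rule $r$ chosen as the one attaining $\mathrm{lower\_bound}(X,I)$ or $\mathrm{upper\_bound}(X,I)$, and the backward direction restricted to inferences newly enabled by $\ell$. Your explicit isolation of the monotonicity/frontier-rule observation to justify the single recorded reason is in fact a more careful treatment of a point the paper's proof passes over silently.
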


\begin{proof}
	Let $\ell'$ be inferred by $\mathrm{OnLiteralTrue}(X,I,\ell)$.
	Hence, $\ell' \notin I$ holds.
	Four cases are possible.
	\begin{enumerate}[leftmargin=*,labelsep=2pt]
		\item
		If there is $(w,\ell) \in \mathit{elem}(X)$, we further distinguish two cases:
		\begin{itemize}[leftmargin=*,labelsep=2pt]
			\item
			There is $(b,\ell') \in \mathit{bounds}(X)$ such that $b \leq \mathrm{min\_sum}(X,I)$.
			Hence, there is $r \in X$ such that $\mathit{bound}(r) = b$, and $\mathit{id}(r) = \ell'$.
			Since $b \leq \mathrm{min\_sum}(X,I)$, (A1) derives $\mathit{id}(r)$.
			Moreover, both propagators set $\mathit{reasons}(\ell') = \{\ell'' \mid (w'',\ell'') \in \mathit{elem}(X), \ell'' \in I\}$.
			
			\item
			There is $(w',\overline{\ell'}) \in \mathit{elem}(X)$ such that $w' \geq \mathrm{upper\_bound}(X,I) - \mathrm{min\_sum}(X,I)$.
			Hence, there is $r \in X$ such that $\mathit{bound}(r) = \min(\{b \mid (b,\ell'') \in \mathit{bounds}(X), \overline{\ell''} \in I\})$.
			Thus, $\overline{\mathit{id}(r)} \in I$, and (A4) derives $\ell'$.
			Moreover, both propagators set $\mathit{reasons}(\ell') = \{\overline{\mathit{id}(r)}\} \cup \{\ell'' \mid (w'',\ell'') \in \mathit{elem}(X), \ell'' \in I\}$.
		\end{itemize}
		
		\item
		If there is $(w,\overline{\ell}) \in \mathit{elem}(X)$, we further distinguish two cases:
		\begin{itemize}[leftmargin=*,labelsep=2pt]
			\item
			There is $(b,\overline{\ell'}) \in \mathit{bounds}(X)$ such that $b > \mathrm{max\_sum}(X,I)$.
			Hence, there is $r \in X$ such that $\mathit{bound}(r) = b$, and $\mathit{id}(r) = \overline{\ell'}$.
			Since $b > \mathrm{max\_sum}(X,I)$, (A2) derives $\overline{\mathit{id}(r)} = \ell'$.
			Moreover, both propagators set $\mathit{reasons}(\ell') = \{\overline{\ell''} \mid (w'',\ell'') \in \mathit{elem}(X), \overline{\ell''} \in I\}$.
			
			\item
			There is $(w',\ell') \in \mathit{elem}(X)$ such that $w' > \mathrm{max\_sum}(X,I) - \mathrm{lower\_bound}(X,I)$.
			Hence, there is $r \in X$ such that $\mathit{bound}(r) = \max(\{b \mid (b,\ell'') \in \mathit{bounds}(X), \ell'' \in I\})$.
			Thus, $\mathit{id}(r) \in I$, and (A3) derives $\ell'$.
			Moreover, both propagators set $\mathit{reasons}(\ell') = \{\mathit{id}(r)\} \cup \{\overline{\ell''} \mid (w'',\ell'') \in \mathit{elem}(X), \overline{\ell''} \in I\}$.
		\end{itemize}
		
		\item
		If there is $(b,\ell) \in \mathit{bounds}(X)$, then there is $(w',\ell') \in \mathit{elem}(X)$ such that $w' > \mathrm{max\_sum}(X,I) - \mathrm{lower\_bound}(X,I)$.
		Hence, there is $r \in X$ such that $\mathit{bound}(r) = \max(\{b \mid (b,\ell'') \in \mathit{bounds}(X),$ $\ell'' \in I\})$.
		Thus, $\mathit{id}(r) \in I$, and (A3) derives $\ell'$.
		Moreover, both propagators set $\mathit{reasons}(\ell') = \{\mathit{id}(r)\} \cup \{\overline{\ell''} \mid (w'',\ell'') \in \mathit{elem}(X), \overline{\ell''} \in I\}$.
		
		\item
		If there is $(b,\overline{\ell}) \in \mathit{bounds}(X)$, there is $(w',\overline{\ell}) \in \mathit{elem}(X)$ such that $w' \geq \mathrm{upper\_bound}(X,I) - \mathrm{min\_sum}(X,I)$.
		Hence, there is $r \in X$ such that $\mathit{bound}(r) = \min(\{b \mid (b,\ell'') \in \mathit{bounds}(X), \overline{\ell''} \in I\})$.
		Thus, $\overline{\mathit{id}(r)} \in I$, and (A4) derives $\ell'$.
		Moreover, both propagators set $\mathit{reasons}(\ell')$ equals to $\{\overline{\mathit{id}(r)}\} \cup \{\ell'' \mid (w'',\ell'') \in \mathit{elem}(X), \ell'' \in I\}$.
	\end{enumerate}
	
	As for the other direction, let $\ell'$ be inferred by (A1)--(A4) applied on $I$, and not derivable from $I \setminus \{\ell\}$.
	We distinguish four cases.
	\begin{enumerate}[leftmargin=*,labelsep=2pt]
		\item
		Literal $\ell'$ is inferred by (A1).
		Hence, there is $r \in X$ such that $\mathit{id}(r) = \ell'$, $\sum_{(w'',\ell'') \in \mathit{elem}(r), \ell'' \in I}{w''} \geq \mathit{bound}(r)$, and $\sum_{(w'',\ell'') \in \mathit{elem}(r), \ell'' \in I \setminus \{\ell\}}{w''} < \mathit{bound}(r)$.
		Thus, there is $(w,\ell) \in \mathit{elem}(X)$, and $\ell'$ is derived by $\mathrm{OnLiteralTrue}(X,I,\ell)$ at line~4 because $(\mathit{bound}(r),\ell') \in \mathit{bounds}(X)$ and $\mathit{bound}(r) \leq \mathrm{min\_sum}(X,I)$.
		Moreover, both propagators set $\mathit{reasons}(\ell') = \{\ell'' \mid (w'',\ell'') \in \mathit{elem}(X), \ell'' \in I\}$.
		
		\item
		Literal $\ell'$ is inferred by (A2).
		Hence, there is $r \in X$ such that $\overline{\mathit{id}(r)} = \ell'$, $\sum_{(w'',\ell'') \in \mathit{elem}(r), \overline{\ell''} \notin I}{w''} < \mathit{bound}(r)$, and $\sum_{(w'',\ell'') \in \mathit{elem}(r), \overline{\ell''} \notin I \setminus \{\ell\}}{w''} \geq \mathit{bound}(r)$.
		Thus, there is $(w,\overline{\ell}) \in \mathit{elem}(X)$, and $\ell'$ is derived by $\mathrm{OnLiteralTrue}(X,I,\ell)$ at line~8 because $(\mathit{bound}(r),\overline{\ell'}) \in \mathit{bounds}(X)$ and $\mathit{bound}(r) > \mathrm{max\_sum}(X,I)$.
		Moreover, both propagators set $\mathit{reasons}(\ell')$ equals to $\{\mathit{id}(r)\} \cup \{\overline{\ell''} \mid (w'',\ell'') \in \mathit{elem}(X), \overline{\ell''} \in I\}$.
		
		\item
		Literal $\ell'$ is inferred by (A3).
		Hence, there is $r \in X$ such that $(w',\ell') \in \mathit{elem}(r)$, $\mathit{id}(r) \in I$, $\sum_{(w'',\ell'') \in \mathit{elem}(r) \setminus \{(w',\ell')\}, \overline{\ell''} \notin I}{w''} < \mathit{bound}(r)$, and either $\mathit{id}(r) \notin I \setminus \{\ell\}$ or the following inequality holds:
		$\sum_{(w'',\ell'') \in \mathit{elem}(r) \setminus \{(w',\ell')\}, \overline{\ell''} \notin I \setminus \{\ell\}}{w''} \geq \mathit{bound}(r)$.
		\begin{itemize}
			\item
			If $\mathit{id}(r) \notin I \setminus \{\ell\}$, then $\mathit{id}(r) = \ell$, and therefore $\ell'$ is derived by $\mathrm{OnLiteralTrue}(X,I,\ell)$ at line~11 because $(\mathit{bound}(r),\ell) \in \mathit{bounds}(X)$, $(w',\ell') \in \mathit{elem}(X)$ and $w'$ is greater than $\mathrm{max\_sum}(X,I) - \mathrm{lower\_bound}(X,I)$.
			Moreover, both propagators set $\mathit{reasons}(\ell')$ equals to $\{\mathit{id}(r)\} \cup \{\overline{\ell''} \mid (w'',\ell'') \in \mathit{elem}(X), \overline{\ell''} \in I\}$.
			
			\item
			If $\sum_{(w'',\ell'') \in \mathit{elem}(r) \setminus \{(w',\ell')\}, \overline{\ell''} \notin I \setminus \{\ell\}}{w''} \geq \mathit{bound}(r)$, there is $(w,\overline{\ell}) \in \mathit{elem}(X)$.
			Hence, $\ell'$ is derived by $\mathrm{OnLiteralTrue}(X,I,\ell)$ at line~8 because there is $(w',\ell') \in \mathit{elem}(X)$ such that $w' > \mathrm{max\_sum}(X,I) - \mathrm{lower\_bound}(X,I)$.
			Moreover, both propagators set $\mathit{reasons}(\ell')$ equals to $\{\mathit{id}(r)\} \cup \{\overline{\ell''} \mid (w'',\ell'') \in \mathit{elem}(X), \overline{\ell''} \in I\}$.
		\end{itemize}
		
		\item
		Literal $\ell'$ is inferred by (A4).
		Hence, there is $r \in X$ such that $(w',\overline{\ell'}) \in \mathit{elem}(r)$, $\overline{\mathit{id}(r)} \in I$, $\sum_{(w'',\ell'') \in \mathit{elem}(r) \setminus \{(w',\overline{\ell'})\}, \ell'' \in I}{w''} \geq \mathit{bound}(r) - w'$, and either $\overline{\mathit{id}(r)} \notin I \setminus \{\ell\}$ or the inequality $\sum_{(w'',\ell'') \in \mathit{elem}(r) \setminus \{(w',\overline{\ell'})\}, \ell'' \in I \setminus \{\ell\}}{w''} < \mathit{bound}(r) - w'$ holds.
		\begin{itemize}
			\item
			If $\overline{\mathit{id}(r)} \notin I \setminus \{\ell\}$, then $\overline{\mathit{id}(r)} = \ell$, and therefore $\ell'$ is derived by $\mathrm{OnLiteralTrue}(X,I,\ell)$ at line~14 because there is $(w',\overline{\ell'}) \in \mathit{elem}(X)$ such that the following inequality holds:
			$w' \geq \mathrm{upper\_bound}(X,I) - \mathrm{min\_sum}(X,I)$.
			Moreover, both propagators set $\mathit{reasons}(\ell')$ equals to $\{\overline{\mathit{id}(r)}\} \cup \{\ell'' \mid (w'',\ell'') \in \mathit{elem}(X), \ell'' \in I\}$.
		\end{itemize}
	\end{enumerate}
	The proof is complete.
	\hfill
\end{proof}

\begin{example}[Continuing Example~\ref{ex:propagatorsets}]
Let $X$ be $\mathit{rules}^{\sum}(\Pi_\mathit{run}')$, and recall that $\mathit{elem}(X) = \{(2,p_2), (5,p_5)\}$, and $\mathit{bounds}(X) = \{(2,a_2),
$ $(5,a_5), (7,a_7)\}$.
$\mathrm{OnLiteralTrue}(X, \{\naf a_7, \naf n_2, p_2\}, p_2)$ returns 
$(a_2, \{p_2\})$ and $(\naf p_5, \{p_2, \naf a_7\})$ because of case~1;
indeed, literal $a_2$ is inferred because $\mathrm{min\_sum}(\{\naf a_7, \naf n_2, p_2\}) = 2$, and literal $\naf p_5$ is inferred because $\mathrm{upper\_bound}(\{\naf a_7, \naf n_2, p_2\}) - \mathrm{min\_sum}(\{\naf a_7, \naf n_2, p_2\}) = 5$.
Similarly, $\mathrm{OnLiteralTrue}(X, \{a_2, n_5, \naf p_5\}, \naf p_5)$ returns $(\naf a_5, \{\naf p_5\})$, $(\naf a_7, \{\naf p_5\})$ and $(p_2, \{\naf p_5, a_2\})$ because of case~2.
$\mathrm{OnLiteralTrue}(X, \{a_5\}, a_5)$ returns the pair $(p_5, \{a_5\})$ because of case~3.
Finally, $\mathrm{OnLiteralTrue}(X, \{\naf a_5\}, \naf a_5)$ returns the pair $(\naf p_5, \{\naf a_5\})$ because of case~4.
$\hfill \lhd$
\end{example}

\section{Implementation and experiments}
\label{sec:exp}

Simplifications and the new propagator presented in Section~\ref{sec:aggr} have been implemented in \textsc{wasp} \cite{DBLP:conf/lpnmr/AlvianoDLR15}, a modern ASP solver implementing the CDCL algorithm.
Some details of the implementation are given in Section~\ref{sec:implementation}, and an application scenario is presented in Section~\ref{sec:componentassignment}.
\textsc{wasp} already implemented the aggregate propagator described in Section~\ref{sec:propagators}, and therefore it is an ideal framework for evaluating empirically how aggregate set propagators impact the performance of stable model search.
Such an evaluation is reported in Section~\ref{sec:experiment}.

\subsection{Implementation}\label{sec:implementation}

The implementation of the shared aggregate set propagator in \textsc{wasp} introduces a few additional optimizations.
First of all, function OnLiteralTrue is called only if at least one of the conditions at lines~1, 5, 9, and 12 is true.
Actually, only one of these conditions can be true for a literal $\ell$, which is trivially the case for lines~9 and 12, and guaranteed for lines~1 and 5 by performing an additional rewriting during simplifications:
if $(w,\ell)$ and $(w',\overline{\ell})$ occur in a rule $r \in \mathit{rules}^{\sum}(\Pi)$, and $w \geq w'$, then $r$ is replaced by a rule $r'$ of the form (\ref{eq:aggr}) such that $\mathit{id}(r') := \mathit{id}(r)$, $\mathit{elem}(r') := (\mathit{elem}(r) \setminus \{(w,\ell), (w',\overline{\ell})\}) \cup \{(w-w',\ell) \mid w-w' > 0\}$, and $\mathit{bound}(r') := \mathit{bound}(r) - w'$.

The second optimization concerns the implementation of the functions $\mathrm{min\_sum}$, $\mathrm{max\_sum}$, $\mathrm{lower\_bound}$, and $\mathrm{upper\_bound}$, which are heavily used by OnLiteralTrue($X$,$I$,$\ell$).
Instead of computing their values by performing a complete iteration on the sets $\mathit{elem}(X)$ and $\mathit{bounds}(X)$, their values with respect to the current assignment $I$ are stored in local variables, which are updated when new literals are assigned, and during backtracking.

Finally, a third optimization regards the computation of reasons, which are used by the solver only in case some inferred literal is involved in the conflict analysis.
Hence, it is natural to compute reasons only when they are required by the solver.
In order to obtain such a behavior, a \emph{trail} of literals is used by each shared aggregate set propagator to store assigned literals of interest.
Specifically, when OnLiteralTrue($X$,$I$,$\ell$) is invoked, literal $\ell$ is added to the trail of set $X$.
If the conflict analysis requires the reasons of a literal inferred by a shared aggregate set propagator, then an iteration on the trail is sufficient to reconstruct the set of literals shown in function OnLiteralTrue.

\subsection{Component Assignment}\label{sec:componentassignment}

The input of the \emph{Component Assignment} problem is a tuple $(C,p,U,B,\mathcal{I},\mathcal{R})$ defined as follows:
$C$ is a set of components (of a computer);
$p$ is a function mapping each component $c \in C$ to its price;
$U$ is a set of users;
$B$ is a function mapping each user $u \in U$ to an interval of possible expense;
$\mathcal{I}$ is a set of sets of jointly incompatible components;
$\mathcal{R}$ is a set of sets of required components.
The goal is to compute a partial assignment $f : C \pfun U$ of components to users such that the following conditions are satisfied:
for all $u \in U$, $\sum_{c \in C,\ f(c) = u}{p(c)} \in B(u)$ holds (all users expend a permitted amount);
for all sets $\mathcal{I}' \in \mathcal{I}$ there are $c,c' \in \mathcal{I}'$ such that $f(c) \neq f(c')$ (no user can be assigned a set of incompatible components);
for all sets $\mathcal{R}' \in \mathcal{R}$, and for all users $u \in U$, there is a component $c \in \mathcal{R}'$ such that $f(c) = u$ (at least one component in each requirement is assigned to every configuration).
Moreover, the price of each configuration has to be provided in output.

\begin{figure}
\figrule
\begin{align*}
\begin{array}{l}
{\color[rgb]{0,0.7,0}\tt \% \mathrm{guess\ an\ assignment\ of\ components:\ each\ component\ to\ at\ most\ one\ user}}\\
\tt \{assign(C,U) : user(U,\_,\_)\} <= 1 :-\ component(C,\_). \vspace{.3em}\\
{\color[rgb]{0,0.7,0}\tt \% check\ budget}\\
\tt cost(U,COST) :-\ user(U,\_,\_),\ COST = {\color{blue}\#sum}\{P,C : assign(C,U), component(C,P)\}.\\
\tt :-\ user(U,MIN,MAX),\ cost(U,C),\ C < MIN.\\
\tt :-\ user(U,MIN,MAX),\ cost(U,C),\ C > MAX.\vspace{.3em}\\
{\color[rgb]{0,0.7,0}\tt \% check\ incompatibilities\ and\ requirements}\\
\tt :-\ user(U,\_,\_),\ incompatibility(I,\_),\ assign(C,U) : incompatibility(I,C).\\
\tt :-\ user(U,\_,\_),\ requirement(R,\_),\ {\color{blue}not}\ assign(C,U) : requirement(R,C).
\end{array}
\end{align*}
\caption{Tested encoding for the Component Assignment benchmark (ASP-Core-2 syntax).}
\label{fig:encoding}
\figrule
\end{figure}
Figure~\ref{fig:encoding} reports an ASP encoding for Component Assignment.
Recall that this is an abstraction of a real world problem, actually originated in the area of medical informatics dealing with the assignment of patients to operation rooms.
The formulation given here is an excerpt focusing on the main source of inefficiency discovered by analyzing the encoding for the original problem.
Specifically, an aggregate similar to $\tt COST = \#sum\{P,C : assign(C,U), component(C,P)\}$ was the culprit of the inefficiency of ASP solvers, and was subsequently subject to complex optimization (essentially, a minimax preference), which is out of the scope of this paper.

\subsection{Experiments}\label{sec:experiment}

The experiment comprises 147 random instances for increasing number of users (from 2 to 8) and components (from 30 to 50). 
Instances are grounded by \textsc{gringo} \cite{DBLP:conf/lpnmr/GebserKKS11,DBLP:journals/tplp/GebserHKLS15}.
Running time of \textsc{wasp} and \textsc{clasp}~3.3.3 \cite{DBLP:journals/ai/GebserKS12,DBLP:conf/lpnmr/GebserKK0S15} were measured on an Intel Xeon 2.4 GHz with 16 GB of memory.
Time and memory were limited to 20 minutes and 15 GB, respectively.
\textsc{wasp} was tested with aggregate propagators (\textsc{wasp-st}, the standard version of \textsc{wasp}) and with shared aggregate set propagators (\textsc{wasp-sh}, the new prototype).

\begin{table}[b]
    \caption{Solved instances and average running time (in seconds).}
    \label{tab:usercomponents}
    \tabcolsep=0.15cm
    \begin{tabular}{lrrrrrrrrrr}
    \toprule
    \multicolumn{3}{l}{\textsc{Component Assignment}} & \multicolumn{2}{c}{\textbf{\textsc{clasp}}} && \multicolumn{2}{c}{\textbf{\textsc{wasp-st}}} &&    
    \multicolumn{2}{c}{\textbf{\textsc{wasp-sh}}}\\
    \cmidrule{4-5}    \cmidrule{7-8}     \cmidrule{10-11}
    \phantom{xxxxxxxxx} & \textbf{\#users} & \textbf{\#inst} & \textbf{sol.} & \textbf{avg t} && \textbf{sol.} & \textbf{avg t} &&    \textbf{sol.} & \textbf{avg t}\\
    \cmidrule{2-11}
& 2  &21    &    19    &    12.2    &&    19    &    165.8    &&    21    &    3.3\\
& 3    &21&    15    &    11.7    &&    15    &    183.3    &&    17    &    14.1\\
& 4    &21&    8    &    100.1    &&    6    &    386.7    &&    11    &    24.6\\
& 5    &21&    3    &    252.9    &&    4    &    592.8    &&    8    &    63.5\\
& 6    &21&    0    &    -    &&    0    &    -    &&    3    &    189.8\\
& 7    &21&    0    &    -    &&    0    &    -    &&    4    &    529.0\\
& 8    &21&    0    &    -    &&    0    &    -    &&    3    &    464.0\\
\cmidrule{2-11}
& Total    &147 &    45    &    43.7    &&    44    &    240.7    &&    67    &    77.1\\
    \cmidrule{1-11}
    \end{tabular}

    \tabcolsep=0.145cm
    \begin{tabular}{lrrrrrrrrrr}
        \multicolumn{3}{l}{\textsc{ASP Competition}}  & \multicolumn{2}{c}{\textbf{\textsc{clasp}}} && \multicolumn{2}{c}{\textbf{\textsc{wasp-st}}} &&    
        \multicolumn{2}{c}{\textbf{\textsc{wasp-sh}}}\\
        \cmidrule{4-5}    \cmidrule{7-8}     \cmidrule{10-11}
        & \textbf{Problem} &\textbf{\#inst} & \textbf{sol.} & \textbf{avg t} && \textbf{sol.} & \textbf{avg t} &&    \textbf{sol.} & \textbf{avg t}\\
        \cmidrule{2-11}
&        ADF            &    200         & 200      & 23.3         && 120       &    107.3    &&    123         & 110.2    \\
&        Bottle Filling                                                              &100         & 100     &      5.7         && 100       &    5.4          &&  100       &5.6 \\
&        Still Life                                                                     &26             & 6         &     81.3       && 6            &    112.7    &&     6           &31.4\\
&        Still Life with Holes                                                 &120           &     70     &     180.0     &&    90          &    156.9    &&     88          &159.7    \\
        \cmidrule{2-11}
&        Total                                                                        &446        &    376     &    48.7  &&    316      &    89.3  &&    317           &89.5  \\
        \bottomrule
    \end{tabular}

\end{table}
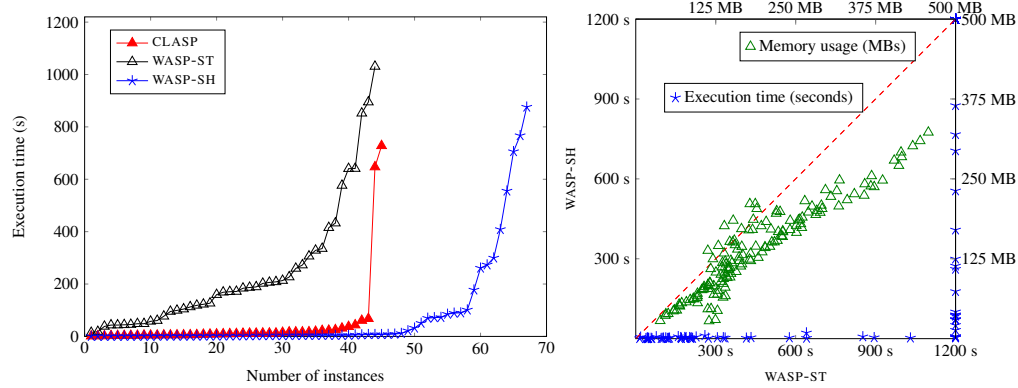
\begin{figure}[t]
    \figrule
    \begin{tikzpicture}[scale=0.7]
    \pgfkeys{%
        /pgf/number format/set thousands separator = {}}
    \begin{axis}[
    scale only axis
    , xlabel={Number of instances}
    , ylabel={Execution time (s)}
    , xmin=0, xmax=70
    , ymin=0, ymax=1220
    , legend style={at={(0.18,0.96)},anchor=north,fill=none}
    , legend columns=1
    , width=0.65\textwidth
    , height=0.45\textwidth
    , ytick={0,200,400,600,800,1000,1200}
    , major tick length=2pt
    , title= {}
    ]
    \addplot [mark size=3pt, color=red, mark=triangle*] [unbounded coords=jump] table[col sep=semicolon, y index=1] {./cactusgenerated.csv}; 
    \addlegendentry{\textsc{clasp}}
    
    \addplot [mark size=3pt, color=black, mark=triangle] [unbounded coords=jump] table[col sep=semicolon, y index=2] {./cactusgenerated.csv}; 
    \addlegendentry{\textsc{wasp-st}}
    
    \addplot [mark size=3pt, color=blue, mark=star] [unbounded coords=jump] table[col sep=semicolon, y index=3] {./cactusgenerated.csv}; 
    \addlegendentry{\textsc{wasp-sh}}    
    \end{axis}
    \end{tikzpicture}%
    \hfill
    \begin{tikzpicture}[scale=0.7]
    \pgfkeys{%
        /pgf/number format/set thousands separator = {}}
    \begin{axis}[
    scale only axis
    , legend style={at={(0.4,0.8)}, anchor=north, align=left}
    , legend cell align=left
    , xlabel={\textsc{wasp-st}}
    , ylabel={\textsc{wasp-sh}}
    , width=0.45\textwidth
    , height=0.45\textwidth
    , xmin=0, xmax=1200
    , ymin=0, ymax=1200
    , xtick={300,600,900,1200}
    , xticklabels={300 s,600 s,900 s,1200 s}
    , ytick={300,600,900,1200}
    , yticklabels={300 s,600 s,900 s,1200 s}
    , major tick length=2pt
    ]
    \addplot [mark size=3pt, only marks, color=blue, mark=star] [unbounded coords=jump] table[col sep=semicolon, x index=2, y index=3] {scatter-performance.csv};     
    \addlegendentry{Execution time (seconds)}
    \addplot [color=red, dashed] [unbounded coords=jump] table[col sep=semicolon, x index=0, y index=0] {scatter-performance.csv}; 
    \end{axis}
    \begin{axis}[
    scale only axis
    , legend style={at={(0.6,0.96)}, anchor=north, align=left}
    , legend cell align=left
    , width=0.45\textwidth
    , height=0.45\textwidth
    , xmin=0, xmax=500
    , ymin=0, ymax=500
    , xtick={125,250,375,500}
    , xticklabels={125 MB,250 MB,375 MB,500 MB}
    , ytick={125,250,375,500}
    , yticklabels={125 MB,250 MB,375 MB,500 MB}
    , axis y line*=right
    , axis x line*=top
    , major tick length=3pt
    ]
    \addplot [mark size=3pt, only marks, color=green!50!black, mark=triangle] [unbounded coords=jump] table[col sep=semicolon, x index=1, y index=2] {./scatter-mem-usercomp.csv};     
    \addlegendentry{Memory usage (MBs)}
    \addplot [color=red, dashed] [unbounded coords=jump] table[col sep=semicolon, x index=0, y index=0] {./scatter-mem-usercomp.csv}; 
    \end{axis}
    \end{tikzpicture}

    \caption{Component Assignment: number of solved instances within a given execution time limit (cactus plot on the left), and instance-by-instance comparison of execution time and memory consumption of the two versions of \textsc{wasp} (scatter plot on the right).}\label{fig:cactusgenerated}    
    \figrule
\end{figure}
The results are given in Table~\ref{tab:usercomponents}.
Even if \textsc{clasp} is faster than \textsc{wasp-st}, solving one more instance overall, both solvers cannot terminate on instances with 6 or more users.
A better performance is reached by \textsc{wasp-sh}, which can solve some instances up to 8 users, and is in general faster than \textsc{clasp} and \textsc{wasp-st}, as confirmed by the cactus plot shown in Figure~\ref{fig:cactusgenerated}.
The figure also shows an instance-by-instance comparison of the performance of the two tested versions of \textsc{wasp}.
Concerning execution time, it can be observed that all the points are below the diagonal, meaning that the propagator always provides a performance gain for the tested instances.
Often, shared aggregate set propagators also reduce the memory footprint of \textsc{wasp}.

A few additional testcases are obtained from instances of ASP Competitions.
The aim of this benchmark is to verify the absence of overhead when shared aggregate set propagators are applied to encodings that, differently from the program in Figure~\ref{fig:encoding}, do not have assignments over aggregate sets with nonuniform weights.
In fact, the benchmarks comprise instances that include assignments over count aggregates, that are, Abstract Dialectical Framework (ADF), Bottle Filling, Connected Maximum-density Still Life (Still Life), and Connected Maximum-density Still Life with Holes (Still Life with Holes).
Results are given in Table~\ref{tab:usercomponents}, where the fact that \textsc{clasp} is in general faster than \textsc{wasp-st} is confirmed by a gap of 60 instances, mainly due to instances of ADF.
Comparing \textsc{wasp-st} and \textsc{wasp-sh}, instead, it can be observed that overall there is no overhead on using the shared aggregate set propagators, and actually there is a slight performance gain.
The performance gain depends on the number of aggregates sharing the same aggregate set, as in fact for instances of ADF \textsc{wasp-st} uses 900 aggregate propagators on average, while \textsc{wasp-sh} only requires 20 shared aggregate set propagators on average.
For Still Life, there is no difference in terms of solved instances, but \textsc{wasp-sh} has a clear advantage over \textsc{wasp-st} in terms of running time, again due to the number of aggregates sharing the same aggregate set.
All in all, there is no overhead on using the shared aggregate set propagators also when the input program does not contain assignments over aggregate sets with nonuniform weights.

\section{Related work}
\label{sec:rel}

Rules of the form (\ref{eq:aggr}) were introduced in \textsc{smodels} \cite{DBLP:journals/ai/SimonsNS02}, and called \emph{basic constraint rules}.
From a computational point of view, there are two mainstream approaches to evaluate ASP programs with aggregates, here referred to as \textit{translation-based} and \textit{propagator-based}.

The first approach aims at rewriting aggregates in terms of other constructs.
For example, the similarities between aggregates and pseudo-Boolean constraints were used to adapt to the ASP setting some compilations of pseudo-Boolean constraints into clauses \cite{DBLP:conf/sara/AavaniMT13}.
Among them, there are adder circuits and binary decision diagrams \cite{DBLP:journals/jair/AbioNORM12,DBLP:journals/jsat/EenS06}, sorting networks and watchdogs \cite{DBLP:conf/sat/BailleuxBR09}.
ASP versions of these translations are implemented in \textsc{lp2sat} and \textsc{lp2normal} \cite{DBLP:conf/jelia/BomansonGJ14,DBLP:conf/lpnmr/BomansonJ13}, where the first solver outputs CNF formulas and the second solver outputs rules of the form (\ref{eq:rule}).
A translation-based approach is also implemented in \textsc{cmodels} \cite{DBLP:conf/lpnmr/LierlerM04,DBLP:journals/jar/GiunchigliaLM06,DBLP:journals/amai/GiunchigliaLM08}, which maps aggregates to nested logic programs \cite{DBLP:journals/tplp/FerrarisL05}.

The concept of propagator was introduced in Satisfiability Modulo Theories \cite{DBLP:journals/jacm/NieuwenhuisOT06,DBLP:conf/cp/AbioS12}, and used in ASP to handle unfounded sets \cite{DBLP:conf/lpnmr/BomansonGJKS15}, support \cite{DBLP:conf/ijcai/AlvianoD16}, and also to avoid the instantiation of constraints \cite{DBLP:journals/tplp/CuteriDRS17}.
Moreover, some extensions of ASP, such as CASP \cite{DBLP:conf/iclp/BaseliceBG05}, were implemented by adding propagators to CDCL solvers \cite{DBLP:journals/tplp/BanbaraKOS17}.
Propagators are also the basis of the implementation of the solver \textsc{idp}~\cite{DBLP:journals/tplp/BruynoogheB0CPJ15}.

Concerning aggregates, \textsc{dlv} \cite{DBLP:conf/datalog/AlvianoFLPPT10,DBLP:conf/lpnmr/AlvianoCDFLPRVZ17} implements ad-hoc techniques to evaluate programs with aggregates, among them a hashmap to compactly represent shared aggregate sets in the input program \cite{DBLP:journals/tplp/FaberPLDI08}.
Differently from the technique proposed in this paper, \textsc{dlv} could identify shared aggregate sets only at a symbolic level, before the grounding process.
Actually, the aim of \textsc{dlv} was to speedup the grounding phase by not instantiate several times the same aggregation set, and no further advantage was obtained during the solving phase (if not a reduced memory footprint).
Specifically, redundant computation during propagation was not eliminated in \textsc{dlv}.
Another difference with the propagator introduced in this paper is that the stable model search algorithm implemented in \textsc{dlv} is much simpler, and does not support constraint learning and non-chronological backtracking, key features of modern solvers.
The computation of reasons was supported in some version of \textsc{dlv}, but only for backjumping rather than for constraint learning \cite{DBLP:journals/fuin/FaberLMR11};
moreover, in \textsc{dlv} reasons are computed as soon as a literal is inferred, rather than postponed to when they are used.

The state-of-the-art solver \textsc{clasp} implements a hybrid approach for handling programs with aggregates \cite{DBLP:conf/iclp/GebserKKS09}, where aggregates involving few literals are compiled into normal rules;
the threshold on the number of literals can be configured from the command-line.
The aggregate propagator introduced in \textsc{clasp} takes advantage of a trail of literals in order to postpone the computation of reasons.
The same trail was used in the aggregate propagator of \textsc{wasp}.
Shared aggregate sets generalize this propagator by compactly representing several aggregate propagators differing only on their bounds.

Aggregates in ASP can be interpreted according to several semantics \cite{DBLP:journals/tocl/Ferraris11,DBLP:journals/ai/FaberPL11,DBLP:journals/tplp/PelovDB07,DBLP:journals/tplp/SonP07,DBLP:journals/tplp/GelfondZ14}, which however agree for programs with non-recursive aggregates.
This is the main reason for inhibiting cycles over aggregates in Section~\ref{sec:syntax}.
In fact, recursive aggregates require specific techniques in addition to the propagators presented in Section~\ref{sec:propagators}, as for example unfounded set detection for the semantics by \citeANP{DBLP:journals/tocl/Ferraris11}.
Such techniques can still be used with the propagator introduced in Section~\ref{sec:aggr}, even in the non-convex case (being the complexity boundary; \citeNP{DBLP:conf/lpnmr/AlvianoF13}) thanks to a rewriting into monotone aggregates \cite{DBLP:journals/tplp/AlvianoFG15,DBLP:conf/ijcai/AlvianoFG16,DBLP:journals/fuin/Alviano16}.
Interestingly, recursive aggregates can be eliminated for the semantics by \citeANP{DBLP:journals/tplp/GelfondZ14} by means of \emph{polynomial, faithful and modular} \cite{DBLP:journals/jancl/Janhunen06} translation functions \cite{DBLP:journals/tplp/AlvianoL15,DBLP:conf/ijcai/AlvianoL16}.
Finally, queries over (super-coherent) ASP programs \cite{DBLP:journals/tplp/AlvianoFW14} with aggregates can be optimized by magic sets \cite{DBLP:conf/lpnmr/AlvianoGL11,DBLP:journals/aicom/AlvianoF11,DBLP:journals/ai/AlvianoFGL12}, and the propagator introduced in this paper can be used also in presence of such an optimization.

Aggregates are also used by algorithms for computing optimal answer sets that are based on unsatisfiable core analysis.
Among them are \textsc{oll} \cite{DBLP:conf/iclp/AndresKMS12} and \textsc{one} \cite{DBLP:conf/ijcai/AlvianoDR15,DBLP:journals/tplp/AlvianoD16,DBLP:conf/ijcai/AlvianoD17}.
\textsc{oll} introduces several aggregates that can be compactly represented by the propagator introduced in this paper, so as to obtain the same behavior of \textsc{one}.

\section{Conclusion}
\label{sec:concl}

ASP syntax allows to bind object variables to the result of an aggregation, frequently a sum.
Stable models of these programs may have constants not occurring in the input, but being among the possible results of such aggregations.
For computing these stable models, mainstream ASP systems have to instantiate such aggregations for all possible interpretations of the undefined literals occurring in aggregate sets involved in an assignment.
All these ground aggregates cause redundant computation of the solver.
Shared aggregate set propagators are conceived to overcome such an inefficiency, and can be applied directly to the output of a grounder, not relying on any knowledge on the origin of the ground aggregates.
The potential performance gain on \textsc{wasp} is of orders of magnitude, and also memory footprint is reduced.
Finally, no overhead is introduced when there are few shared aggregate sets in the input program.

\section{Acknowledgments}

Mario Alviano has been partially supported by the POR CALABRIA FESR 2014-2020 project ``DLV Large Scale'' (CUP J28C17000220006), by the EU H2020 PON I\&C 2014-2020 project ``S2BDW'' (CUP B28I17000250008), and by GNCS-INdAM.
Carmine Dodaro has been partially by the Grant FFABR, ``Fondo per il finanziamento delle attivit\`a base di
ricerca, comma 295 della Legge di Bilancio 2017 (Legge 232/2016)''.

\bibliographystyle{acmtrans}

\begin{thebibliography}{}

\bibitem[\protect\citeauthoryear{Aavani, Mitchell, and Ternovska}{Aavani
  et~al\mbox{.}}{2013}]{DBLP:conf/sara/AavaniMT13}
{\sc Aavani, A.}, {\sc Mitchell, D.~G.}, {\sc and} {\sc Ternovska, E.} 2013.
\newblock New encoding for translating pseudo-boolean constraints into {SAT}.
\newblock In {\em {Symposium on Abstraction, Reformulation, and
  Approximation}}. {AAAI}.

\bibitem[\protect\citeauthoryear{Ab{\'{\i}}o, Nieuwenhuis, Oliveras,
  Rodr{\'{\i}}guez{-}Carbonell, and Mayer{-}Eichberger}{Ab{\'{\i}}o
  et~al\mbox{.}}{2012}]{DBLP:journals/jair/AbioNORM12}
{\sc Ab{\'{\i}}o, I.}, {\sc Nieuwenhuis, R.}, {\sc Oliveras, A.}, {\sc
  Rodr{\'{\i}}guez{-}Carbonell, E.}, {\sc and} {\sc Mayer{-}Eichberger, V.}
  2012.
\newblock {A New Look at BDDs for Pseudo-Boolean Constraints}.
\newblock {\em Journal of Artificial Intelligence Research\/}~{\em 45},
  443--480.

\bibitem[\protect\citeauthoryear{Ab{\'{\i}}o and Stuckey}{Ab{\'{\i}}o and
  Stuckey}{2012}]{DBLP:conf/cp/AbioS12}
{\sc Ab{\'{\i}}o, I.} {\sc and} {\sc Stuckey, P.~J.} 2012.
\newblock Conflict directed lazy decomposition.
\newblock In {\em {International Conference on Principles and Practice of
  Constraint Programming}}. Lecture Notes in Computer Science, vol. 7514.
  Springer, 70--85.

\bibitem[\protect\citeauthoryear{Alviano}{Alviano}{2011}]{DBLP:journals/ia/Alviano11}
{\sc Alviano, M.} 2011.
\newblock Efficient recursive aggregate evaluation in logic programming.
\newblock {\em Intelligenza Artificiale\/}~{\em 5,\/}~2, 207--215.

\bibitem[\protect\citeauthoryear{Alviano}{Alviano}{2016}]{DBLP:journals/fuin/Alviano16}
{\sc Alviano, M.} 2016.
\newblock Evaluating answer set programming with non-convex recursive
  aggregates.
\newblock {\em Fundamenta Informaticae\/}~{\em 149,\/}~1-2, 1--34.

\bibitem[\protect\citeauthoryear{Alviano, Calimeri, Charwat, and
  et~al.}{Alviano
  et~al\mbox{.}}{2013}]{DBLP:conf/lpnmr/AlvianoCCDDIKKOPPRRSSSWX13}
{\sc Alviano, M.}, {\sc Calimeri, F.}, {\sc Charwat, G.}, {\sc and} {\sc
  et~al.} 2013.
\newblock {The Fourth Answer Set Programming Competition: Preliminary Report}.
\newblock In {\em {International Conference on Logic Programming and
  Nonmonotonic Reasoning}}. {Lecture Notes in Computer Science}, vol. 8148.
  Springer, 42--53.

\bibitem[\protect\citeauthoryear{Alviano, Calimeri, Dodaro, Fusc{\`{a}}, Leone,
  Perri, Ricca, Veltri, and Zangari}{Alviano
  et~al\mbox{.}}{2017}]{DBLP:conf/lpnmr/AlvianoCDFLPRVZ17}
{\sc Alviano, M.}, {\sc Calimeri, F.}, {\sc Dodaro, C.}, {\sc Fusc{\`{a}}, D.},
  {\sc Leone, N.}, {\sc Perri, S.}, {\sc Ricca, F.}, {\sc Veltri, P.}, {\sc
  and} {\sc Zangari, J.} 2017.
\newblock The {ASP} system {DLV2}.
\newblock In {\em International Conference on Logic Programming and
  Nonmonotonic Reasoning}. Lecture Notes in Computer Science, vol. 10377.
  Springer, 215--221.

\bibitem[\protect\citeauthoryear{Alviano, Calimeri, Faber, Leone, and
  Perri}{Alviano et~al\mbox{.}}{2011}]{DBLP:journals/jair/AlvianoCFLP11}
{\sc Alviano, M.}, {\sc Calimeri, F.}, {\sc Faber, W.}, {\sc Leone, N.}, {\sc
  and} {\sc Perri, S.} 2011.
\newblock Unfounded sets and well-founded semantics of answer set programs with
  aggregates.
\newblock {\em Journal of Artificial Intelligence Research\/}~{\em 42},
  487--527.

\bibitem[\protect\citeauthoryear{Alviano and Dodaro}{Alviano and
  Dodaro}{2016a}]{DBLP:journals/tplp/AlvianoD16}
{\sc Alviano, M.} {\sc and} {\sc Dodaro, C.} 2016a.
\newblock Anytime answer set optimization via unsatisfiable core shrinking.
\newblock {\em {Theory and Practice of Logic Programming}\/}~{\em 16,\/}~5-6,
  533--551.

\bibitem[\protect\citeauthoryear{Alviano and Dodaro}{Alviano and
  Dodaro}{2016b}]{DBLP:conf/ijcai/AlvianoD16}
{\sc Alviano, M.} {\sc and} {\sc Dodaro, C.} 2016b.
\newblock Completion of disjunctive logic programs.
\newblock In {\em {International Joint Conference on Artificial Intelligence}}.
  {IJCAI/AAAI} Press, 886--892.

\bibitem[\protect\citeauthoryear{Alviano and Dodaro}{Alviano and
  Dodaro}{2017}]{DBLP:conf/ijcai/AlvianoD17}
{\sc Alviano, M.} {\sc and} {\sc Dodaro, C.} 2017.
\newblock Unsatisfiable core shrinking for anytime answer set optimization.
\newblock In {\em International Joint Conference on Artificial Intelligence}.
  ijcai.org, 4781--4785.

\bibitem[\protect\citeauthoryear{Alviano, Dodaro, Faber, Leone, and
  Ricca}{Alviano et~al\mbox{.}}{2013}]{DBLP:conf/lpnmr/AlvianoDFLR13}
{\sc Alviano, M.}, {\sc Dodaro, C.}, {\sc Faber, W.}, {\sc Leone, N.}, {\sc
  and} {\sc Ricca, F.} 2013.
\newblock {WASP:} {A} native {ASP} solver based on constraint learning.
\newblock In {\em {International Conference on Logic Programming and
  Nonmonotonic Reasoning}}. {Lecture Notes in Computer Science}, vol. 8148.
  Springer, 54--66.

\bibitem[\protect\citeauthoryear{Alviano, Dodaro, Leone, and Ricca}{Alviano
  et~al\mbox{.}}{2015}]{DBLP:conf/lpnmr/AlvianoDLR15}
{\sc Alviano, M.}, {\sc Dodaro, C.}, {\sc Leone, N.}, {\sc and} {\sc Ricca, F.}
  2015.
\newblock Advances in {WASP}.
\newblock In {\em {International Conference on Logic Programming and
  Nonmonotonic Reasoning}}. {Lecture Notes in Computer Science}, vol. 9345.
  Springer, 40--54.

\bibitem[\protect\citeauthoryear{Alviano, Dodaro, and Ricca}{Alviano
  et~al\mbox{.}}{2015}]{DBLP:conf/ijcai/AlvianoDR15}
{\sc Alviano, M.}, {\sc Dodaro, C.}, {\sc and} {\sc Ricca, F.} 2015.
\newblock {A MaxSAT Algorithm Using Cardinality Constraints of Bounded Size}.
\newblock In {\em {International Joint Conference on Artificial Intelligence}}.
  {AAAI} Press, 2677--2683.

\bibitem[\protect\citeauthoryear{Alviano and Faber}{Alviano and
  Faber}{2011}]{DBLP:journals/aicom/AlvianoF11}
{\sc Alviano, M.} {\sc and} {\sc Faber, W.} 2011.
\newblock Dynamic magic sets and super-coherent answer set programs.
\newblock {\em {AI} Communications\/}~{\em 24,\/}~2, 125--145.

\bibitem[\protect\citeauthoryear{Alviano and Faber}{Alviano and
  Faber}{2013}]{DBLP:conf/lpnmr/AlvianoF13}
{\sc Alviano, M.} {\sc and} {\sc Faber, W.} 2013.
\newblock The complexity boundary of answer set programming with generalized
  atoms under the {FLP} semantics.
\newblock In {\em {International Conference on Logic Programming and
  Nonmonotonic Reasoning}}. {Lecture Notes in Computer Science}, vol. 8148.
  Springer, 67--72.

\bibitem[\protect\citeauthoryear{Alviano, Faber, and Gebser}{Alviano
  et~al\mbox{.}}{2015}]{DBLP:journals/tplp/AlvianoFG15}
{\sc Alviano, M.}, {\sc Faber, W.}, {\sc and} {\sc Gebser, M.} 2015.
\newblock Rewriting recursive aggregates in answer set programming: back to
  monotonicity.
\newblock {\em {Theory and Practice of Logic Programming}\/}~{\em 15,\/}~4-5,
  559--573.

\bibitem[\protect\citeauthoryear{Alviano, Faber, and Gebser}{Alviano
  et~al\mbox{.}}{2016}]{DBLP:conf/ijcai/AlvianoFG16}
{\sc Alviano, M.}, {\sc Faber, W.}, {\sc and} {\sc Gebser, M.} 2016.
\newblock From non-convex aggregates to monotone aggregates in {ASP}.
\newblock In {\em {International Joint Conference on Artificial Intelligence}}.
  {IJCAI/AAAI} Press, 4100--4194.

\bibitem[\protect\citeauthoryear{Alviano, Faber, Greco, and Leone}{Alviano
  et~al\mbox{.}}{2012}]{DBLP:journals/ai/AlvianoFGL12}
{\sc Alviano, M.}, {\sc Faber, W.}, {\sc Greco, G.}, {\sc and} {\sc Leone, N.}
  2012.
\newblock Magic sets for disjunctive datalog programs.
\newblock {\em Artificial Intelligence\/}~{\em 187}, 156--192.

\bibitem[\protect\citeauthoryear{Alviano, Faber, Leone, Perri, Pfeifer, and
  Terracina}{Alviano et~al\mbox{.}}{2010}]{DBLP:conf/datalog/AlvianoFLPPT10}
{\sc Alviano, M.}, {\sc Faber, W.}, {\sc Leone, N.}, {\sc Perri, S.}, {\sc
  Pfeifer, G.}, {\sc and} {\sc Terracina, G.} 2010.
\newblock {The Disjunctive Datalog System DLV}.
\newblock In {\em Datalog Reloaded}. {Lecture Notes in Computer Science}, vol.
  6702. Springer, 282--301.

\bibitem[\protect\citeauthoryear{Alviano, Faber, and Strass}{Alviano
  et~al\mbox{.}}{2016}]{DBLP:conf/aaai/AlvianoFS16}
{\sc Alviano, M.}, {\sc Faber, W.}, {\sc and} {\sc Strass, H.} 2016.
\newblock Boolean functions with ordered domains in answer set programming.
\newblock In {\em {AAAI} Conference on Artificial Intelligence}. {AAAI} Press,
  879--885.

\bibitem[\protect\citeauthoryear{Alviano, Faber, and Woltran}{Alviano
  et~al\mbox{.}}{2014}]{DBLP:journals/tplp/AlvianoFW14}
{\sc Alviano, M.}, {\sc Faber, W.}, {\sc and} {\sc Woltran, S.} 2014.
\newblock Complexity of super-coherence problems in {ASP}.
\newblock {\em {Theory and Practice of Logic Programming}\/}~{\em 14,\/}~3,
  339--361.

\bibitem[\protect\citeauthoryear{Alviano, Greco, and Leone}{Alviano
  et~al\mbox{.}}{2011}]{DBLP:conf/lpnmr/AlvianoGL11}
{\sc Alviano, M.}, {\sc Greco, G.}, {\sc and} {\sc Leone, N.} 2011.
\newblock Dynamic magic sets for programs with monotone recursive aggregates.
\newblock In {\em International Conference on Logic Programming and
  Nonmonotonic Reasoning}. Lecture Notes in Computer Science, vol. 6645.
  Springer, 148--160.

\bibitem[\protect\citeauthoryear{Alviano and Leone}{Alviano and
  Leone}{2015}]{DBLP:journals/tplp/AlvianoL15}
{\sc Alviano, M.} {\sc and} {\sc Leone, N.} 2015.
\newblock Complexity and compilation of gz-aggregates in answer set
  programming.
\newblock {\em {Theory and Practice of Logic Programming}\/}~{\em 15,\/}~4-5,
  574--587.

\bibitem[\protect\citeauthoryear{Alviano and Leone}{Alviano and
  Leone}{2016}]{DBLP:conf/ijcai/AlvianoL16}
{\sc Alviano, M.} {\sc and} {\sc Leone, N.} 2016.
\newblock On the properties of gz-aggregates in answer set programming.
\newblock In {\em {International Joint Conference on Artificial Intelligence}}.
  {IJCAI/AAAI} Press, 4105--4109.

\bibitem[\protect\citeauthoryear{Andres, Kaufmann, Matheis, and Schaub}{Andres
  et~al\mbox{.}}{2012}]{DBLP:conf/iclp/AndresKMS12}
{\sc Andres, B.}, {\sc Kaufmann, B.}, {\sc Matheis, O.}, {\sc and} {\sc Schaub,
  T.} 2012.
\newblock Unsatisfiability-based optimization in clasp.
\newblock In {\em Technical Communications of the International Conference on
  Logic Programming}. LIPIcs, vol.~17. Schloss Dagstuhl - Leibniz-Zentrum fuer
  Informatik, 211--221.

\bibitem[\protect\citeauthoryear{Bailleux, Boufkhad, and Roussel}{Bailleux
  et~al\mbox{.}}{2009}]{DBLP:conf/sat/BailleuxBR09}
{\sc Bailleux, O.}, {\sc Boufkhad, Y.}, {\sc and} {\sc Roussel, O.} 2009.
\newblock New encodings of pseudo-boolean constraints into {CNF}.
\newblock In {\em The International Conferences on Theory and Applications of
  Satisfiability Testing}. Lecture Notes in Computer Science, vol. 5584.
  Springer, 181--194.

\bibitem[\protect\citeauthoryear{Banbara, Kaufmann, Ostrowski, and
  Schaub}{Banbara et~al\mbox{.}}{2017}]{DBLP:journals/tplp/BanbaraKOS17}
{\sc Banbara, M.}, {\sc Kaufmann, B.}, {\sc Ostrowski, M.}, {\sc and} {\sc
  Schaub, T.} 2017.
\newblock Clingcon: The next generation.
\newblock {\em {Theory and Practice of Logic Programming}\/}~{\em 17,\/}~4,
  408--461.

\bibitem[\protect\citeauthoryear{Bartholomew, Lee, and Meng}{Bartholomew
  et~al\mbox{.}}{2011}]{DBLP:conf/aaaiss/BartholomewLM11}
{\sc Bartholomew, M.}, {\sc Lee, J.}, {\sc and} {\sc Meng, Y.} 2011.
\newblock First-order semantics of aggregates in answer set programming via
  modified circumscription.
\newblock In {\em Logical Formalizations of Commonsense Reasoning {AAAI} Spring
  Symposium}. {AAAI}.

\bibitem[\protect\citeauthoryear{Baselice, Bonatti, and Gelfond}{Baselice
  et~al\mbox{.}}{2005}]{DBLP:conf/iclp/BaseliceBG05}
{\sc Baselice, S.}, {\sc Bonatti, P.~A.}, {\sc and} {\sc Gelfond, M.} 2005.
\newblock Towards an integration of answer set and constraint solving.
\newblock In {\em {International Conference on Logic Programming}}. Lecture
  Notes in Computer Science, vol. 3668. Springer, 52--66.

\bibitem[\protect\citeauthoryear{Bomanson, Gebser, and Janhunen}{Bomanson
  et~al\mbox{.}}{2014}]{DBLP:conf/jelia/BomansonGJ14}
{\sc Bomanson, J.}, {\sc Gebser, M.}, {\sc and} {\sc Janhunen, T.} 2014.
\newblock Improving the normalization of weight rules in answer set programs.
\newblock In {\em {European Conference on Logics in Artificial Intelligence}}.
  Lecture Notes in Computer Science, vol. 8761. Springer, 166--180.

\bibitem[\protect\citeauthoryear{Bomanson, Gebser, Janhunen, Kaufmann, and
  Schaub}{Bomanson et~al\mbox{.}}{2015}]{DBLP:conf/lpnmr/BomansonGJKS15}
{\sc Bomanson, J.}, {\sc Gebser, M.}, {\sc Janhunen, T.}, {\sc Kaufmann, B.},
  {\sc and} {\sc Schaub, T.} 2015.
\newblock Answer set programming modulo acyclicity.
\newblock In {\em {International Conference on Logic Programming and
  Nonmonotonic Reasoning}}. Lecture Notes in Computer Science, vol. 9345.
  Springer, 143--150.

\bibitem[\protect\citeauthoryear{Bomanson and Janhunen}{Bomanson and
  Janhunen}{2013}]{DBLP:conf/lpnmr/BomansonJ13}
{\sc Bomanson, J.} {\sc and} {\sc Janhunen, T.} 2013.
\newblock Normalizing cardinality rules using merging and sorting
  constructions.
\newblock In {\em Logic Programming and Nonmonotonic Reasoning}. Lecture Notes
  in Computer Science, vol. 8148. Springer, 187--199.

\bibitem[\protect\citeauthoryear{Brewka, Eiter, and Truszczynski}{Brewka
  et~al\mbox{.}}{2011}]{DBLP:journals/cacm/BrewkaET11}
{\sc Brewka, G.}, {\sc Eiter, T.}, {\sc and} {\sc Truszczynski, M.} 2011.
\newblock Answer set programming at a glance.
\newblock {\em Communications of the {ACM}\/}~{\em 54,\/}~12, 92--103.

\bibitem[\protect\citeauthoryear{Bruynooghe, Blockeel, Bogaerts, de~Cat,
  Pooter, Jansen, Labarre, Ramon, Denecker, and Verwer}{Bruynooghe
  et~al\mbox{.}}{2015}]{DBLP:journals/tplp/BruynoogheB0CPJ15}
{\sc Bruynooghe, M.}, {\sc Blockeel, H.}, {\sc Bogaerts, B.}, {\sc de~Cat, B.},
  {\sc Pooter, S.~D.}, {\sc Jansen, J.}, {\sc Labarre, A.}, {\sc Ramon, J.},
  {\sc Denecker, M.}, {\sc and} {\sc Verwer, S.} 2015.
\newblock Predicate logic as a modeling language: modeling and solving some
  machine learning and data mining problems with \emph{IDP3}.
\newblock {\em {Theory and Practice of Logic Programming}\/}~{\em 15,\/}~6,
  783--817.

\bibitem[\protect\citeauthoryear{Calimeri, Faber, Gebser, Ianni, Kaminski,
  Krennwallner, Leone, Ricca, and Schaub}{Calimeri
  et~al\mbox{.}}{2013}]{aspcore2}
{\sc Calimeri, F.}, {\sc Faber, W.}, {\sc Gebser, M.}, {\sc Ianni, G.}, {\sc
  Kaminski, R.}, {\sc Krennwallner, T.}, {\sc Leone, N.}, {\sc Ricca, F.}, {\sc
  and} {\sc Schaub, T.} 2013.
\newblock {ASP-Core-2 Input Language Format}.

\bibitem[\protect\citeauthoryear{Calimeri, Gebser, Maratea, and Ricca}{Calimeri
  et~al\mbox{.}}{2016}]{DBLP:journals/ai/CalimeriGMR16}
{\sc Calimeri, F.}, {\sc Gebser, M.}, {\sc Maratea, M.}, {\sc and} {\sc Ricca,
  F.} 2016.
\newblock {Design and results of the Fifth Answer Set Programming Competition}.
\newblock {\em Artificial Intelligence\/}~{\em 231}, 151--181.

\bibitem[\protect\citeauthoryear{Cuteri, Dodaro, Ricca, and
  Sch{\"{u}}ller}{Cuteri et~al\mbox{.}}{2017}]{DBLP:journals/tplp/CuteriDRS17}
{\sc Cuteri, B.}, {\sc Dodaro, C.}, {\sc Ricca, F.}, {\sc and} {\sc
  Sch{\"{u}}ller, P.} 2017.
\newblock Constraints, lazy constraints, or propagators in {ASP} solving: An
  empirical analysis.
\newblock {\em {Theory and Practice of Logic Programming}\/}~{\em 17,\/}~5-6,
  780--799.

\bibitem[\protect\citeauthoryear{Dodaro, Alviano, Faber, Leone, Ricca, and
  Sirianni}{Dodaro et~al\mbox{.}}{2011}]{DBLP:conf/cilc/DodaroAFLRS11}
{\sc Dodaro, C.}, {\sc Alviano, M.}, {\sc Faber, W.}, {\sc Leone, N.}, {\sc
  Ricca, F.}, {\sc and} {\sc Sirianni, M.} 2011.
\newblock The birth of a {WASP:} preliminary report on a new {ASP} solver.
\newblock In {\em Italian Conference on Computational Logic}. {CEUR} Workshop
  Proceedings, vol. 810. CEUR-WS.org, 99--113.

\bibitem[\protect\citeauthoryear{Dodaro, Gasteiger, Leone, Musitsch, Ricca, and
  Schekotihin}{Dodaro et~al\mbox{.}}{2016}]{DBLP:journals/tplp/DodaroGLMRS16}
{\sc Dodaro, C.}, {\sc Gasteiger, P.}, {\sc Leone, N.}, {\sc Musitsch, B.},
  {\sc Ricca, F.}, {\sc and} {\sc Schekotihin, K.} 2016.
\newblock Combining answer set programming and domain heuristics for solving
  hard industrial problems (application paper).
\newblock {\em {Theory and Practice of Logic Programming}\/}~{\em 16,\/}~5-6,
  653--669.

\bibitem[\protect\citeauthoryear{E{\'{e}}n and S{\"{o}}rensson}{E{\'{e}}n and
  S{\"{o}}rensson}{2006}]{DBLP:journals/jsat/EenS06}
{\sc E{\'{e}}n, N.} {\sc and} {\sc S{\"{o}}rensson, N.} 2006.
\newblock Translating pseudo-boolean constraints into {SAT}.
\newblock {\em {Journal on Satisfiability, Boolean Modeling and
  Computation}\/}~{\em 2,\/}~1-4, 1--26.

\bibitem[\protect\citeauthoryear{Faber, Leone, Maratea, and Ricca}{Faber
  et~al\mbox{.}}{2011}]{DBLP:journals/fuin/FaberLMR11}
{\sc Faber, W.}, {\sc Leone, N.}, {\sc Maratea, M.}, {\sc and} {\sc Ricca, F.}
  2011.
\newblock Look-back techniques for {ASP} programs with aggregates.
\newblock {\em Fundamenta Informaticae\/}~{\em 107,\/}~4, 379--413.

\bibitem[\protect\citeauthoryear{Faber, Pfeifer, and Leone}{Faber
  et~al\mbox{.}}{2011}]{DBLP:journals/ai/FaberPL11}
{\sc Faber, W.}, {\sc Pfeifer, G.}, {\sc and} {\sc Leone, N.} 2011.
\newblock Semantics and complexity of recursive aggregates in answer set
  programming.
\newblock {\em Artificial Intelligence\/}~{\em 175,\/}~1, 278--298.

\bibitem[\protect\citeauthoryear{Faber, Pfeifer, Leone, Dell'Armi, and
  Ielpa}{Faber et~al\mbox{.}}{2008}]{DBLP:journals/tplp/FaberPLDI08}
{\sc Faber, W.}, {\sc Pfeifer, G.}, {\sc Leone, N.}, {\sc Dell'Armi, T.}, {\sc
  and} {\sc Ielpa, G.} 2008.
\newblock Design and implementation of aggregate functions in the {DLV} system.
\newblock {\em {Theory and Practice of Logic Programming}\/}~{\em 8,\/}~5-6,
  545--580.

\bibitem[\protect\citeauthoryear{Ferraris}{Ferraris}{2011}]{DBLP:journals/tocl/Ferraris11}
{\sc Ferraris, P.} 2011.
\newblock Logic programs with propositional connectives and aggregates.
\newblock {\em {ACM} Transactions on Computational Logic\/}~{\em 12,\/}~4, 25.

\bibitem[\protect\citeauthoryear{Ferraris and Lifschitz}{Ferraris and
  Lifschitz}{2005}]{DBLP:journals/tplp/FerrarisL05}
{\sc Ferraris, P.} {\sc and} {\sc Lifschitz, V.} 2005.
\newblock Weight constraints as nested expressions.
\newblock {\em {Theory and Practice of Logic Programming}\/}~{\em 5,\/}~1-2,
  45--74.

\bibitem[\protect\citeauthoryear{Gebser, Harrison, Kaminski, Lifschitz, and
  Schaub}{Gebser et~al\mbox{.}}{2015}]{DBLP:journals/tplp/GebserHKLS15}
{\sc Gebser, M.}, {\sc Harrison, A.}, {\sc Kaminski, R.}, {\sc Lifschitz, V.},
  {\sc and} {\sc Schaub, T.} 2015.
\newblock Abstract gringo.
\newblock {\em {Theory and Practice of Logic Programming}\/}~{\em 15,\/}~4-5,
  449--463.

\bibitem[\protect\citeauthoryear{Gebser, Kaminski, Kaufmann, Romero, and
  Schaub}{Gebser et~al\mbox{.}}{2015}]{DBLP:conf/lpnmr/GebserKK0S15}
{\sc Gebser, M.}, {\sc Kaminski, R.}, {\sc Kaufmann, B.}, {\sc Romero, J.},
  {\sc and} {\sc Schaub, T.} 2015.
\newblock {Progress in clasp Series 3}.
\newblock In {\em {International Conference on Logic Programming and
  Nonmonotonic Reasoning}}. {Lecture Notes in Computer Science}, vol. 9345.
  Springer, 368--383.

\bibitem[\protect\citeauthoryear{Gebser, Kaminski, Kaufmann, and Schaub}{Gebser
  et~al\mbox{.}}{2009}]{DBLP:conf/iclp/GebserKKS09}
{\sc Gebser, M.}, {\sc Kaminski, R.}, {\sc Kaufmann, B.}, {\sc and} {\sc
  Schaub, T.} 2009.
\newblock On the implementation of weight constraint rules in conflict-driven
  {ASP} solvers.
\newblock In {\em {International Conference on Logic Programming}}. Lecture
  Notes in Computer Science, vol. 5649. Springer, 250--264.

\bibitem[\protect\citeauthoryear{Gebser, Kaminski, K{\"{o}}nig, and
  Schaub}{Gebser et~al\mbox{.}}{2011}]{DBLP:conf/lpnmr/GebserKKS11}
{\sc Gebser, M.}, {\sc Kaminski, R.}, {\sc K{\"{o}}nig, A.}, {\sc and} {\sc
  Schaub, T.} 2011.
\newblock Advances in \emph{gringo} series 3.
\newblock In {\em {International Conference on Logic Programming and
  Nonmonotonic Reasoning}}. Lecture Notes in Computer Science, vol. 6645.
  Springer, 345--351.

\bibitem[\protect\citeauthoryear{Gebser, Kaufmann, Romero, Otero, Schaub, and
  Wanko}{Gebser et~al\mbox{.}}{2013}]{DBLP:conf/aaai/GebserKROSW13}
{\sc Gebser, M.}, {\sc Kaufmann, B.}, {\sc Romero, J.}, {\sc Otero, R.}, {\sc
  Schaub, T.}, {\sc and} {\sc Wanko, P.} 2013.
\newblock Domain-specific heuristics in answer set programming.
\newblock In {\em {{AAAI} Conference on Artificial Intelligence}}. {AAAI}
  Press.

\bibitem[\protect\citeauthoryear{Gebser, Kaufmann, and Schaub}{Gebser
  et~al\mbox{.}}{2009}]{DBLP:conf/lpnmr/GebserKS09}
{\sc Gebser, M.}, {\sc Kaufmann, B.}, {\sc and} {\sc Schaub, T.} 2009.
\newblock The conflict-driven answer set solver clasp: Progress report.
\newblock In {\em International Conference Logic Programming and Nonmonotonic
  Reasoning}. Lecture Notes in Computer Science, vol. 5753. Springer, 509--514.

\bibitem[\protect\citeauthoryear{Gebser, Kaufmann, and Schaub}{Gebser
  et~al\mbox{.}}{2012}]{DBLP:journals/ai/GebserKS12}
{\sc Gebser, M.}, {\sc Kaufmann, B.}, {\sc and} {\sc Schaub, T.} 2012.
\newblock Conflict-driven answer set solving: From theory to practice.
\newblock {\em Artificial Intelligence\/}~{\em 187}, 52--89.

\bibitem[\protect\citeauthoryear{Gebser, Kaufmann, and Schaub}{Gebser
  et~al\mbox{.}}{2013}]{DBLP:conf/ijcai/GebserKS13}
{\sc Gebser, M.}, {\sc Kaufmann, B.}, {\sc and} {\sc Schaub, T.} 2013.
\newblock Advanced conflict-driven disjunctive answer set solving.
\newblock In {\em {International Joint Conference on Artificial Intelligence}}.
  {IJCAI/AAAI}.

\bibitem[\protect\citeauthoryear{Gebser, Maratea, and Ricca}{Gebser
  et~al\mbox{.}}{2017}]{DBLP:journals/jair/GebserMR17}
{\sc Gebser, M.}, {\sc Maratea, M.}, {\sc and} {\sc Ricca, F.} 2017.
\newblock The sixth answer set programming competition.
\newblock {\em Journal of Artificial Intelligence Research\/}~{\em 60}, 41--95.

\bibitem[\protect\citeauthoryear{Gelfond and Lifschitz}{Gelfond and
  Lifschitz}{1988}]{DBLP:conf/iclp/GelfondL88}
{\sc Gelfond, M.} {\sc and} {\sc Lifschitz, V.} 1988.
\newblock The stable model semantics for logic programming.
\newblock In {\em Logic Programming}. {MIT} Press, 1070--1080.

\bibitem[\protect\citeauthoryear{Gelfond and Zhang}{Gelfond and
  Zhang}{2014}]{DBLP:journals/tplp/GelfondZ14}
{\sc Gelfond, M.} {\sc and} {\sc Zhang, Y.} 2014.
\newblock Vicious circle principle and logic programs with aggregates.
\newblock {\em Theory and Practice of Logic Programming\/}~{\em 14,\/}~4-5,
  587--601.

\bibitem[\protect\citeauthoryear{Giunchiglia, Leone, and Maratea}{Giunchiglia
  et~al\mbox{.}}{2008}]{DBLP:journals/amai/GiunchigliaLM08}
{\sc Giunchiglia, E.}, {\sc Leone, N.}, {\sc and} {\sc Maratea, M.} 2008.
\newblock On the relation among answer set solvers.
\newblock {\em Annals of Mathematics and Artificial Intelligence\/}~{\em
  53,\/}~1-4, 169--204.

\bibitem[\protect\citeauthoryear{Giunchiglia, Lierler, and Maratea}{Giunchiglia
  et~al\mbox{.}}{2006}]{DBLP:journals/jar/GiunchigliaLM06}
{\sc Giunchiglia, E.}, {\sc Lierler, Y.}, {\sc and} {\sc Maratea, M.} 2006.
\newblock {Answer Set Programming Based on Propositional Satisfiability}.
\newblock {\em Journal of Automated Reasoning\/}~{\em 36,\/}~4, 345--377.

\bibitem[\protect\citeauthoryear{Janhunen}{Janhunen}{2006}]{DBLP:journals/jancl/Janhunen06}
{\sc Janhunen, T.} 2006.
\newblock Some (in)translatability results for normal logic programs and
  propositional theories.
\newblock {\em Journal of Applied Non-Classical Logics\/}~{\em 16,\/}~1-2,
  35--86.

\bibitem[\protect\citeauthoryear{Janhunen and Niemel{\"{a}}}{Janhunen and
  Niemel{\"{a}}}{2016}]{DBLP:journals/aim/JanhunenN16}
{\sc Janhunen, T.} {\sc and} {\sc Niemel{\"{a}}, I.} 2016.
\newblock The answer set programming paradigm.
\newblock {\em {AI} Magazine\/}~{\em 37,\/}~3, 13--24.

\bibitem[\protect\citeauthoryear{Lierler and Maratea}{Lierler and
  Maratea}{2004}]{DBLP:conf/lpnmr/LierlerM04}
{\sc Lierler, Y.} {\sc and} {\sc Maratea, M.} 2004.
\newblock {Cmodels-2: SAT-based Answer Set Solver Enhanced to Non-tight
  Programs}.
\newblock In {\em {International Conference on Logic Programming and
  Nonmonotonic Reasoning}}. {Lecture Notes in Computer Science}, vol. 2923.
  Springer, 346--350.

\bibitem[\protect\citeauthoryear{Lifschitz}{Lifschitz}{2016}]{DBLP:journals/aim/Lifschitz16}
{\sc Lifschitz, V.} 2016.
\newblock Answer sets and the language of answer set programming.
\newblock {\em {AI} Magazine\/}~{\em 37,\/}~3, 7--12.

\bibitem[\protect\citeauthoryear{Liu, Pontelli, Son, and Truszczynski}{Liu
  et~al\mbox{.}}{2010}]{DBLP:journals/ai/LiuPST10}
{\sc Liu, L.}, {\sc Pontelli, E.}, {\sc Son, T.~C.}, {\sc and} {\sc
  Truszczynski, M.} 2010.
\newblock Logic programs with abstract constraint atoms: The role of
  computations.
\newblock {\em Artificial Intelligence\/}~{\em 174,\/}~3-4, 295--315.

\bibitem[\protect\citeauthoryear{Nieuwenhuis, Oliveras, and
  Tinelli}{Nieuwenhuis
  et~al\mbox{.}}{2006}]{DBLP:journals/jacm/NieuwenhuisOT06}
{\sc Nieuwenhuis, R.}, {\sc Oliveras, A.}, {\sc and} {\sc Tinelli, C.} 2006.
\newblock {Solving {SAT} and {SAT} Modulo Theories: From an abstract
  Davis--Putnam--Logemann--Loveland procedure to DPLL(\emph{T})}.
\newblock {\em Journal of the {ACM}\/}~{\em 53,\/}~6, 937--977.

\bibitem[\protect\citeauthoryear{Pelov, Denecker, and Bruynooghe}{Pelov
  et~al\mbox{.}}{2007}]{DBLP:journals/tplp/PelovDB07}
{\sc Pelov, N.}, {\sc Denecker, M.}, {\sc and} {\sc Bruynooghe, M.} 2007.
\newblock Well-founded and stable semantics of logic programs with aggregates.
\newblock {\em Theory and Practice of Logic Programming\/}~{\em 7,\/}~3,
  301--353.

\bibitem[\protect\citeauthoryear{Simons, Niemel{\"{a}}, and Soininen}{Simons
  et~al\mbox{.}}{2002}]{DBLP:journals/ai/SimonsNS02}
{\sc Simons, P.}, {\sc Niemel{\"{a}}, I.}, {\sc and} {\sc Soininen, T.} 2002.
\newblock Extending and implementing the stable model semantics.
\newblock {\em Artificial Intelligence\/}~{\em 138,\/}~1-2, 181--234.

\bibitem[\protect\citeauthoryear{Son and Pontelli}{Son and
  Pontelli}{2007}]{DBLP:journals/tplp/SonP07}
{\sc Son, T.~C.} {\sc and} {\sc Pontelli, E.} 2007.
\newblock A constructive semantic characterization of aggregates in answer set
  programming.
\newblock {\em Theory and Practice of Logic Programming\/}~{\em 7,\/}~3,
  355--375.

\end{thebibliography}

\label{lastpage}

\end{document}